\documentclass[twoside,11pt]{article}
\usepackage{jmlr2e}

\usepackage{algorithm}
\usepackage{algcompatible}
\usepackage{amsmath}
\algnewcommand\INPUT{\item[\textbf{Input:}]}%
\algnewcommand\OUTPUT{\item[\textbf{Output:}]}%

\usepackage{amssymb}
\usepackage{graphicx,psfrag,epsf}
\usepackage{enumerate,verbatim}
\usepackage{enumitem}
\usepackage{natbib}
\usepackage{url} 
\usepackage{bm}
\usepackage[utf8]{inputenc}
\usepackage[english]{babel}
\usepackage{longtable,booktabs}
\usepackage{mathtools}
\usepackage{color}


%
%

\def\olbx{\overline{\bx}}
\def\b0{\mathbf{0}}
\def\b1{\mathbf{1}}

\def\argmin{\operatornamewithlimits{argmin}}

\def\bbeta{\boldsymbol{\beta}}

\def\cov{\mathrm{Cov}}
\def\diag{\operatorname{diag}}

\def\bD{\mathbf{D}}

\def\bI{\mathbf{I}}
\def\rI{\mathrm{I}}

\def\bJ{\mathbf{J}}

\def\logit{\mathrm{logit}}

\def\pr{\mathrm{P}}

\def\bR{\mathbf{R}}

\def\bx{\mathbf{x}}

\def\var{\mathrm{Var\,}}
\def\bV{\mathbf{V}}

\def\bX{\mathbf{X}}
\def\bY{\mathbf{Y}}
\def\bE{\mathbf{E\,}}

\def\rT{\mathrm{T}}

\usepackage{lastpage}
\jmlrheading{22}{2021}{1-\pageref{LastPage}}{2/19; Revised
12/20}{3/21}{19-132}{Zhe Fei and Yi Li}
\ShortHeadings{Estimation and Inference for High Dimensional GLMs}{Fei and Li}



\firstpageno{1}

\begin{document}

		\title{\bf Estimation and Inference for High Dimensional Generalized Linear Models: A Splitting and Smoothing Approach}
		\author{\name Zhe Fei \email feiz@ucla.edu \\
			\addr Department of Biostatistics\\
			UCLA\\
			Los Angeles, California, 90025 \\
			\AND
			\name Yi Li \email yili@umich.edu \\
			\addr Department of Biostatistics\\
			University of Michigan\\
			Ann Arbor, Michigan, 48109}
	\editor{Boaz Nadler}	
		\maketitle

	\begin{abstract}%
        The  focus of modern biomedical studies {has gradually shifted} to {explanation and estimation of} joint effects of high dimensional predictors on  disease {risks}. Quantifying  uncertainty {in} these estimates may {provide valuable insight into} prevention strategies or treatment decisions for both patients and physicians. High dimensional inference, including confidence intervals and hypothesis testing, has sparked much interest. {While much} work has been done {in the} linear regression {setting,} there is lack of literature on inference for high dimensional generalized linear models{. We} propose a novel and computationally feasible method, which accommodates a variety of  outcome types, including normal, {binomial}, and Poisson {data}.
		We use a ``splitting and smoothing'' approach, which splits samples into two parts,  performs variable selection using one part and conducts partial {regression}  with the other part. Averaging the estimates over multiple random {splits}, we obtain the smoothed estimates, which are numerically stable. We show that the estimates are consistent,  asymptotically normal, and construct confidence intervals with proper coverage probabilities for all predictors. 
	    We  examine the finite sample performance of our method by comparing it with the existing methods and applying  it to analyze a lung cancer cohort study.
		
	\end{abstract}
	
	\begin{keywords} Confidence intervals, dimension reduction, high dimensional inference for GLMs, sparsity, sure screening
	\end{keywords}

	\section{Introduction}
	\label{s:1}
	
	In {the big data era}, high dimensional regression has been widely used 
	to address questions arising from many scientific fields, ranging from genomics to sociology  \citep{hastie2009elements,fan2010selective}. 
	For example, modern biomedical research has gradually shifted to understanding  joint effects of  high dimensional predictors on  disease {outcomes} ({e.g. molecular biomarkers on the} onset of lung cancer) \citep[among others]{vaske2010inference,chen2014semi}. 
	{A motivating} clinical study is the Boston Lung Cancer Survivor Cohort (BLCSC){, one of the largest} comprehensive lung cancer survivor {cohorts}, which investigates the molecular mechanisms underlying lung cancer \citep{Christiani2017blcs}. 
		Using a target gene approach \citep{moon2003current,garrigos2018clinical,ho2019machine},  we {analyzed} a subset of  $708$ lung cancer patients and $751$ controls,   with 6,800 single nucleotide polymorphisms (SNPs) from $15$ cancer related genes, in addition to demographic variables such as age, gender, race, education level, and smoking status. Our objective {was} to {determine which} {covariates were predictive in} distinguishing cases from controls. {As} smoking {is known to play} a {significant} role in the development of lung cancer, we {were}  interested in estimating and testing the interaction between smoking status (never {versus} ever smoked) and {each SNP}, in addition to {the} main {effect of the SNP}.  Quantifying  uncertainty {of} the estimated effects helps inform prevention strategies or treatment decisions for patients and physicians \citep{minnier2011perturbation}.

	Considerable {progress} has been made in drawing inferences based on penalized linear models \citep{zhang2014confidence,javanmard2014confidence,buhlmann2014high,dezeure2015high}. {While techniques for variable selection and estimation  in high dimensional settings have been extended to generalized linear models (GLMs) and beyond \citep{van2008high,fan2009ultrahigh,witten2009covariance}, high dimensional inference in these settings is still at its infancy stage.} For example, \citet{buhlmann2014high}  generalized {the} de-sparsified LASSO to high dimensional GLMs, while \cite{ning2017general} proposed a de-correlated score test for penalized M-estimators. In the presence of high dimensional control variables, \cite{belloni2014inference,belloni2016post} proposed a post{-}double selection procedure for estimation and inference of a single treatment effect {and} \cite{lee2016exact} characterized the distribution of a post-LASSO-selection estimator {conditional} on the \emph{selected variables}, but only for the linear regression.  
	
	However, the performance of these methods may depend heavily on  tuning parameters, often chosen by computationally intensive cross-validation. Also, these methods may require inverting a $p \times p$ information matrix (where $p$ is the number of predictors), or equivalently, estimating a $p \times p$ precision matrix, with extensive computation and stringent technical conditions. For example, the sparse precision matrix assumption may be  violated in GLMs, resulting in biased estimates \citep{lu2020revisit}. 
	
	We propose {a new approach for drawing} inference with high dimensional GLMs. The idea is to randomly split the samples into two {sub-samples} \citep{meinshausen2009p},  use {the first sub-sample} to select a subset of important predictors and achieve  dimension reduction, and  use the remaining samples to {parallelly} fit low dimensional GLMs by appending each predictor to the selected set, one at a time, {to} obtain the estimated coefficient for each predictor, regardless of being selected or not.
	{As with other methods for} high dimensional regression \citep{zhang2014confidence,javanmard2014confidence,buhlmann2014high}, one key assumption is that the number of non-zero components of $\bbeta^*$ is small relative to the sample size, where $\bbeta^*$ are the {true values underlying} the  parameter vector, $\bbeta$, in a regression model.  The sparsity {condition} {is} reasonable in some biomedical applications. For example, in the context of cancer genomics, it is likely that a certain type of cancer is related to only a handful of oncogenes and tumor suppressor genes \citep{lee2010oncogenes,goossens2015cancer}. 
	Under this sparsity condition, we  show that  our proposed estimates are consistent and asymptotically normal. However, {these} estimates can be highly variable due to {both} the random {splitting} of  data and the variation incurred {through} selection. 
	To stabilize the  estimation and  account for the variation {induced by} variable selection, we repeat the random splitting a number of times and average  the resulting estimates to obtain the  smoothed estimates{. These smoothed estimators} are consistent and asymptotically normal, with improved efficiency.
	
	Our approach, termed Splitting and Smoothing for GLM (SSGLM), aligns with multi-sample splitting  \citep{meinshausen2009p, wang2020debiased} and bagging  \citep{buhlmann2002analyzing,friedman2007bagging,efron2014estimation}, and differs from the existing methods based on penalized regression \citep{zhang2014confidence,buhlmann2014high,ning2017general,javanmard2018debiasing}. The procedure has several novelties.
	{First, it addresses the} {high dimensional estimation problem through} the {aggregation of low dimensional estimations} {and  presents computational advantages over} {other existing methods. For example}, {de-biased} {methods} {\citep{buhlmann2014high,javanmard2018debiasing}} {require} {well-estimated high dimensional precision matrices} {for proper} {inference} {(e.g. correct coverage probabilities), which is  statistically and computationally challenging. Complicated procedures {that involve} choosing a large number of tuning parameters are needed to strike a balance between estimation accuracy and model complexity; see  \cite{buhlmann2014high} and  \cite{javanmard2014confidence}. In contrast, our algorithm is {more} straightforward as it avoids  estimating a high dimensional precision matrix by adopting a ``split and select'' strategy} {with minimal} tuning.  
	Second, we {have derived} the variance estimator using the infinitesimal jackknife method adapted to the splitting and smoothing procedure \citep{efron2014estimation}{. This} is free of parametric assumptions and	leads to confidence intervals with correct coverage probabilities. 
	{Third, we have relaxed the stringent ``selection consistency" assumption on variable selection, which is required in} \cite{fei2019drawing}. Our procedure is valid with a mild ``sure screening'' assumption for the selection method. 
	Finally, our framework facilitates  hypothesis testing and drawing {inference} on predetermined contrasts in the presence of high dimensional nuisance  parameters. 
	
	The rest of the paper is organized as follows. Section \ref{s2} describes the SSGLM {procedure} and Section \ref{s3} introduces {its} theoretical properties. Section \ref{s4} describes the inferential procedure and  Section \ref{s5}  extends it to accommodate any sub-vectors of parameters of interest. Section \ref{s6} provides simulations and comparisons with the existing methods. Section \ref{s7} reports {our} analysis of the BLCSC data. We conclude the paper with a brief discussion in Section \ref{s8}.
	
	\section{Method}\label{s2}
	
	\subsection{Notation}
	
	We assume the observed data  $(Y_i,\bx_i) = \left(Y_i, x_{i1},x_{i2},\ldots,x_{ip} \right), i=1, \ldots, n, $ are i.i.d.{~}copies of $(Y,\bx) = \left(Y, x_{1},x_{2},\ldots,x_{p} \right)$. Without loss of generality, we assume that the predictors are centered with $\bE(x_j)=0,\ j=1, \ldots,p.$
	In the matrix {form}, we denote the {$n$ samples of} observed data by $\bD^{(n)} =(\bY,\bX)$, where $\bY=(Y_1,\ldots,Y_n)^\rT$ and $\bX = (\bX_1,\ldots,\bX_p)$.
	Here, $\bX_j = (x_{1j}, \ldots, x_{nj})^\rT$ for $j=1, \ldots, p$. In addition, $\overline{\bX} = (\mathbf{1},\bX)$ includes {an} $n \times 1$ column vector of 1's.
	To accommodate non-Gaussian outcomes, we assume the outcome variable {belongs to the} linear exponential distribution family, which includes {the} normal, Bernoulli, Poisson, and negative{-binomial} distributions. That is, given $\bx$, the conditional density function for $Y$ is
	\begin{equation}\label{glm1}
	f(Y|\theta)  = \exp\left\{ {Y\theta - A(\theta)} +c(Y) \right\},
	\end{equation}
	where $A(\cdot)$ is a specified function that links the mean of $Y$ to $\bx$ through  $\theta$. We assume the second derivative of $A(\theta)$ is  continuous and positive. We consider the canonical mean parameter, $\theta = \olbx\bbeta$, where $\olbx = (1,\bx)$ and $\bbeta = (\beta_0,\beta_1,\ldots,\beta_p)^\rT$ {include} an intercept term. 
	Specifically, {denote} $ \mu= \bE(Y | \bx) =  A'(\theta) = g^{-1} \left(\olbx\bbeta\right)$,
	and $\bV(Y | \bx) = A''(\theta) =\nu(\mu)$, where $g(\cdot)$ and $\nu(\cdot)$ are the link and variance  functions, respectively.  
	
    The forms of $A(\cdot)$, $g(\cdot)$, {and} $\nu(\cdot)$ depend on the  data type of $Y$. For example, with the outcome in BLCSC being {a} binary {indicator of} lung cancer, 
    $A(\theta) = \log\left(1+e^{\theta} \right),\ g(\mu) = \logit(\mu) = \log\left( \frac{\mu}{1 - \mu}\right)$  and  $\nu(\mu)= \mu(1-\mu)$,
	corresponding to the well known logistic regression. Based on $(\bY,\bX)$, the negative log-likelihood with model (\ref{glm1}) is
	\begin{equation*}
	\ell(\bbeta) = \ell(\bbeta;\bY,\bX) = \frac{1}{n}\sum_{i=1}^{n} \left\{  A(\theta_i) - Y_i\theta_i \right\} = \frac{1}{n}\sum_{i=1}^{n}  \left\{  A( \olbx_i\bbeta ) - Y_i( \olbx_i\bbeta ) \right\},
	\end{equation*}
	where $\theta_i = \overline{\bx_i}\bbeta${ and}  $\overline{\bx_i} = (1,x_{i1},x_{i2},\ldots,x_{ip}).$ The score and the observed information are
	\begin{gather*}
	U(\bbeta) = \frac{1}{n} {\overline{\bX}}^\rT\left\{ A'( \overline{\bX}\bbeta ) - \bY \right\} {\rm and} \,\,
	\widehat{I}(\bbeta) = \frac{1}{n}{\overline{\bX}}^\rT \bV \overline{\bX},
	\end{gather*}
	which are a $(p+1) \times 1$ vector and a $(p+1) \times (p+1)$ matrix, respectively. Here, $\bV = \mathrm{diag}\{\nu(\mu_1),\ldots,\nu(\mu_n)\}$ and
	$\mu_i = g^{-1}(\overline{\bx_i}\bbeta)$ for $i=1, \ldots, n$. When a univariate function such as $A'(\cdot)$ {is applied} to a vector, it operates component-wise and returns a vector of values. 
	
	We add an index set, $S\subset \{1,2,\ldots,p\}$, {to the subscripts of vectors and matrices} to index subvectors $\bx_{iS} = (x_{ij})_{j\in S}$ and $\olbx_{iS} = (1,\bx_{iS})$, and  submatrices $\bX_S=(\bX_j)_{j\in S}$ and $\overline{\bX}_S= (\mathbf{1},\bX_S)$. Moreover, we define $S_{+j} = \{j\}\cup S$ and $S_{-j} = S\setminus \{j\}$. As a convention,   let $S_{+0} =S_{-0} = S$, where ``0" corresponds to the intercept.
	
	We write $\bbeta_S = (\beta_0,\beta_j)_{j\in S}$, which always includes the intercept and  is of length $1+|S|$.
	The negative log-likelihood  for model  (\ref{glm1}) that regresses   $\bY$ on $\bX_S$ (termed {the} partial regression) is
	\begin{equation} \label{partialreg}
	\ell_S(\bbeta_S)= \ell(\bbeta_S;\bY,\bX_S)  
	= \frac{1}{n}\sum_{i=1}^{n}  \left\{  A( \olbx_{iS}\bbeta_S ) - Y_i \olbx_{iS}\bbeta_S  \right\}.
	\end{equation}
	Similarly, $U_S(\bbeta_S) = n^{-1} {\overline{\bX}_S}^\rT \left( A'( \overline{\bX}_S\bbeta_S ) - \bY \right)$ {and} $ \widehat{I}_S(\bbeta_S) =
	n^{-1} {\overline{\bX}_S}^\rT \bV_S{\overline{\bX}_S},$
	where $\bV_S = \mathrm{diag}\{ A''( \olbx_{1S}\bbeta_S),\ldots, A''( \olbx_{nS}\bbeta_S )\}$. 
	Let the {true values of } $\bbeta$ be $\bbeta^* = (\beta_0^*,\beta_1^*,\ldots,\beta_p^*)$. Define the expected information as
	$I^* = \bE \{\widehat{I}(\bbeta^*)\}$.
	Let $S^* = \left\{j \ne 0: \beta^*_j \ne 0 \right\}$ denote the active set, and {let} $s_0 = |S^*|${ be} the number of nonzero and non-intercept elements in $\bbeta^*$. When $ S \supseteq S^*$, define the ``observed" \emph{sub-information} by $\widehat{I}_S= \widehat{I}_S({\bbeta}_S^*)$, and the ``expected" sub-information by  $I_S = \bE\{ \widehat{I}_S\}$.  The latter is equal to the submatrix of $I^*$ with rows and columns indexed by $S$, which is denoted by $I^*_S$.
	
	
	\subsection{Proposed SSGLM estimator} 
	
	Under model (\ref{glm1}), we assume a sparsity condition that $s_0$ is small relative to the sample size and will be detailed in Section 3. 
	We randomly split the samples, $\bD^{(n)}$, into two parts, $\bD_1$ and $\bD_2$, with sample sizes $|\bD_1|=n_1$, $|\bD_2|=n_2$, {respectively, such that} $n_1 + n_2 = n$. {For example}, {we can consider} an equal splitting with $n_1 = n_2 = n/2$. We  apply a variable selection scheme, $\mathcal{S}_\lambda$, where $\lambda$ denotes the tuning parameters{, to $\bD_2$} to select a subset of important predictors $S\subset \{1, \ldots,p\}$, {with $|S| <n  $  for dimension reduction.} {Then using} $\bD_1 = (\bY^1,\bX^1)$, for each $j = 1,2,\dots, p$, we  fit a low dimensional GLM by regressing $\bY^1$ on $\bX^1_{S_{+j}}$, where  $S_{+j} = \{j\}\cup S$. Denote the maximum likelihood estimate (MLE) of each fitted model as $\widetilde{\bbeta}_{S_{+j} }$, and define $\widetilde{\beta}_j = \left(\widetilde{\bbeta}_{S_{+j} } \right)_j$, the element of $\widetilde{\bbeta}_{S_{+j} }$ corresponding to predictor $\bX_j$. We denote by $\widetilde{\beta}_0$  the estimator of the intercept from the model $\bY^1\sim \bX^1_{S}$.  {Thus, the one-time estimator based on a single data split is defined} as
	\begin{equation}\label{onetime}
	\begin{aligned}
	\widetilde{\bbeta}_{S_{+j} } = 
	\argmin_{\bbeta_{S_{+j}} } \ell_{S_{+j}} (\bbeta_{S_{+j}}) = \argmin_{\bbeta_{S_{+j}} } \ell (\bbeta_{S_{+j}};\bY^1,\bX^1_{S_{+j}});\\
	\widetilde{\beta}_{j} = \left(\widetilde{\bbeta}_{S_{+j} } \right)_j;\quad
	\widetilde{\bbeta} = (\widetilde{\beta}_0,\widetilde{\beta}_1,\ldots,\widetilde{\beta}_p). 
	\end{aligned}
	\end{equation}
	{In the} linear regression {setting} \citep{fei2019drawing}, $\widetilde{\beta}_j$ in (\ref{onetime}) has an explicit form,\linebreak
	$\widetilde{\beta}_j = \Big\{ ({{\bX}^1_{S_{+j}}}^\mathrm{T}{\bX}^1_{S_{+j}})^{-1}{{\bX}^1_{S_{+j}}}^\mathrm{T}{\bY}^1 \Big\}_{j}\;.$
	
	The rationale for this one-time estimator is that if the subset of important predictors, $S$, is equal to or contains the  active set, $S^*$, {then} $\widetilde{\beta}_j$ would  be a consistent estimator regardless of whether variable $j$ is selected or not \citep{fei2019drawing}. We show in Theorem \ref{thm1} that the one-time estimator is indeed  consistent and asymptotically normal {in the GLM setting.}
 
	However, {the}  estimator based on a single split is highly variable,  making it difficult to separate {true} signals from noises. {This phenomenon is analogous to using a single tree in the bagging algorithm \citep{buhlmann2002analyzing}.}
	To reduce {this} {variability}, we resort to a multi-sample splitting scheme. We randomly split the data multiple times, repeat the estimation procedure, and average the resulting estimates to obtain the smoothed coefficient estimates. Specifically, for each $b =1,2, \ldots,B$, where $B$ is large,  we randomly split the data, $\bD^{(n)}$, into $\bD_1^b$ and $\bD_2^b$, with  $|\bD_1^b|=n_1$ and $|\bD_2^b|=n_2$ such that {the } splitting proportion is ${q} = n_1/n,\;0<{q} <1$. Denote the candidate set of variables {selected} by {applying} $\mathcal{S}_\lambda$ {to} $\bD_2^b$ as $S^b$, and the estimates via (\ref{onetime}),  as $\widetilde{\bbeta}^b = (\widetilde{\beta}_0^b,\widetilde{\beta}_1^b,\ldots,\widetilde{\beta}_p^b)$.
Then the smoothed estimator, termed {the} SSGLM estimator, is defined to be
	\begin{equation} \label{smest}  
	\widehat{\bbeta} = (\widehat{\beta}_0,\widehat{\beta}_1,\ldots,\widehat{\beta}_p),\;\mathrm{where}\;
	\widehat{\beta}_j = \frac{1}{B} \sum_{b=1}^{B}  \widetilde{\beta}_j^b.
	\end{equation}
The procedure is described in Algorithm \ref{alg1}.


\begin{algorithm}
	\caption{ SSGLM Estimator\label{alg1}}
	\begin{algorithmic}[1]
		\REQUIRE A variable selection procedure denoted by $\mathcal{S}_\lambda$
		\INPUT Data $(\bY,\bX)$, a splitting proportion  ${q} \in (0,1)$, and the number of random {splits} $B$
		\OUTPUT Coefficient vector estimator $\widehat{\bbeta}$
		\FOR{$b=1,2,\ldots,B$}
		\STATE Split the samples into $\bD_1$ and $\bD_2$, with  $|\bD_1|={q} n$, $|\bD_2|=(1-{q})n$
		\STATE Apply $\mathcal{S}_\lambda$ {to} $\bD_2$ to select predictors indexed by $S\subset \{1, \ldots, p\}$
		\FOR{$j=0,1,\ldots,p$}
		 \STATE With $S_{+j} = \{j\}\cup S$, fit model (\ref{glm1}) by regressing $\bY^1$  on $\bX^1_{S_{+j}}$, where $\bD_1 = (\bY^1,\bX^1)$, and compute the MLE $\widetilde{\bbeta}_{S_{+j} }$ as in (\ref{onetime})
		 \STATE Compute $\widetilde{\beta}_{j}^b = \left(\widetilde{\bbeta}_{S^b_{+j} } \right)_j$, which is the coefficient for predictor $\bX_j$ ($\widetilde{\beta}_{0}^b$ represents the intercept)
		 \ENDFOR
		 \STATE Output  $\widetilde{\bbeta}^b= (\widetilde{\beta}_0^b,\widetilde{\beta}_1^b,\ldots,\widetilde{\beta}_p^b)$
		\ENDFOR
		\STATE Compute $\widehat{\bbeta} = (\widehat{\beta}_0,\widehat{\beta}_1,\ldots,\widehat{\beta}_p),\;\mathrm{where}\;
		\widehat{\beta}_j = \frac{1}{B} \sum_{b=1}^{B}  \widetilde{\beta}_j^b$
	\end{algorithmic}
\end{algorithm}

\begin{algorithm}
	\caption{Model-free Variance Estimator\label{alg2}}
	\begin{algorithmic}[1]
		\INPUT $n, n_1, B$, $\widetilde{\bbeta}^b, b=1,2,\ldots,B$ and $\widehat{\bbeta}$
		\OUTPUT Variance estimator $\widehat{V}^B_j$ for $\widehat{\beta}_j$, $j=0,1,\ldots,p$
		\STATE For $i=1,2,\ldots,n$ and $b=1,2,\ldots,B$, define $J_{bi} = \bI\left( (Y_i,\bx_i)\in \bD_1^b  \right) \in\{0,1\}$, and $J_{\cdot i} = \left(\sum_{b=1}^{B}J_{bi}\right)/B$		
		\FOR{$j = 0,1,\ldots,p$}
		\STATE Compute \begin{equation*} 
		\widehat{V}_j= \frac{n(n-1)}{\left(n-n_1\right)^2} \sum_{i=1}^{n}\widehat{\mathrm{cov}}^2_{ij},
		\end{equation*}
		where
		\begin{equation*}
		\widehat{\mathrm{cov}}_{ij}=\frac{1}{B} \sum_{b=1}^{B}\left( J_{bi}- J_{\cdot i}\right) \left( \widetilde{\beta}^b_j-\widehat{\beta}_j \right)   
		\end{equation*}
		\STATE Compute \begin{equation*}
		    \widehat{V}_j^B = \widehat{V}_j - \frac{n}{B^2}\frac{n_1}{n-n_1}\sum_{b=1}^{B}(\widetilde{\beta}^b_j-\widehat{\beta}_j)^2
		\end{equation*}
		\ENDFOR
		\STATE Set $\widehat{V}^B = \left(\widehat{V}^B_1,\widehat{V}^B_2,\dots, \widehat{V}^B_p \right)$
	\end{algorithmic}
\end{algorithm}
	
	\section{Theoretical Results}\label{s3}
	
We specify the following regularity conditions. 
\begin{enumerate}[label=(A\arabic*)]
	\item \label{a1} (\emph{Bounded observations}) $\|\bx\|_\infty\le C_0$ and $\bE |Y|< \infty$.  Without loss of generality, we assume $C_0 = 1$. 
	\item \label{a2} (\emph{Bounded eigenvalues and effects}) The eigenvalues of $\Sigma=\bE(\olbx^\rT \olbx)$, where $\olbx = (1,\bx)$, are bounded below and above by constants $c_{\min}, c_{\max}$, such that
	$$ 0 < c_{\min} \le \lambda_{\min}\left(\Sigma \right) < \lambda_{\max}\left(\Sigma \right) \le c_{\max} < \infty.$$ 
	In addition, there exists a constant $c_\beta>0 $ such that $|\bbeta^*|_\infty \le c_\beta$. 
	\item \label{a3} (\emph{Sparsity and sure screening property}) Recall that $S^* = \left\{j \ne 0: \beta^*_j \ne 0 \right\}$ and $s_0 = |S^*|$.
	Let $\widehat{S}_{\lambda_n}$ be the index set of predictors selected by $\mathcal{S}$ with a tuning parameter $\lambda_n$.
	Assume $\log p =o(n^{1/2})$, there exists a sequence $\{\lambda_n\}_{n\ge 1}$ and constants $0\le c_1<1/2$, $c_2, K_1,K_2>0$ such that $s_0  \le K_1n^{c_1}$,  $|\widehat{S}_{\lambda_n}|\le K_1n^{c_1}$, and 
	\begin{equation*}
	\pr\left( S^* \subseteq \widehat{S}_{\lambda_n} \right)\ge 1 - K_2(p\vee n)^{-1-c_2}.
	\end{equation*}
\end{enumerate}

Assumption \ref{a1} states that the predictors are uniformly bounded, which is reasonable as  predictors are often normalized during data pre-processing.
As defined  in \ref{a2}, $\Sigma = \diag (1, \Sigma_x)$, where $\Sigma_x$ is the variance-covariance matrix of $\bx$. The boundedness of the eigenvalues  of the variance-covariance matrix of  $\bx$ has been commonly assumed  in the  high dimensional literature \citep{zhao2006model,belloni2011l1,fan2014adaptive,van2014asymptotically}. \ref{a3} restricts the orders of $p$ and $n$  as well as the sparsity of $\bbeta^*$. Both \ref{a1} and \ref{a2}  guarantee the convergence of the {MLEs for the} low dimensional GLMs  (\ref{onetime})  with a  diverging number of predictors \citep{portnoy1985asymptotic,he2000parameters}. \ref{a3} requires $\mathcal{S}$  to possess the sure screening property, which  relaxes the  selection consistency assumption  in  \cite{fei2019drawing}.    

Variable selection methods that satisfy the sure screening property are available.
For example,  Assumptions \ref{a1} and \ref{a2}, along with a ``beta-min'' condition, which stipulates that $\min_{j\in S^*}|\beta^*_j|> c_0 n^{-\kappa}$ with $c_0>0,0<\kappa<1/2$,
ensure that  the commonly used  sure independence screening (SIS) procedure \citep{fan2010sure} satisfy 
the sure screening property; see Theorem 4 in \cite{fan2010sure}. {While a ``beta-min'' condition is common in deriving the sure screening property, it is not required for the de-biased type of estimators.}
We take $\mathcal{S}$ to be {the} SIS {procedure} when conducting variable selection in simulations and the data analysis.   
{Theorems  \ref{thm1} and  \ref{mainthm} correspond to the one-time estimator and the SSGLM estimator, respectively.}

	\begin{theorem}\label{thm1}
		Given model (\ref{glm1}) and assumptions \ref{a1}---\ref{a3}, consider the one-time estimator $\widetilde{\bbeta} = (\widetilde{\beta}_0,\widetilde{\beta}_1,\ldots,\widetilde{\beta}_p)^\rT$ as defined in (\ref{onetime}). Denote $p_s = |S|$ and $\widetilde{\sigma}_j^2 = \left(\{{I}^*_{S_{+j}} \}^{-1} \right)_{jj}, \;j\in \{0,1, \ldots,p\}$. Then as $n\rightarrow \infty$,
		\begin{itemize}
			\item [i.] $\| \widetilde{\bbeta}_{S_{+j} } - \bbeta^*_{S_{+j}} \|_2^2 = o_p(p_s/n)$, if $p_s\log p_s/n \rightarrow 0$;
			\item [ii.] $\sqrt{n_1}\left(\widetilde{\beta}_{j} - \beta^*_j\right)/\widetilde{\sigma}_{j} \stackrel{d}{\rightarrow} N(0,1)$, if  $p_s^2\log p_s/n \rightarrow 0.$
		\end{itemize}
	\end{theorem}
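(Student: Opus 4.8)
The plan is to condition on the selection sample $\bD_2$ and argue on the event $E_n=\{S^*\subseteq S\}$, which by the sure-screening part of \ref{a3} satisfies $\pr(E_n)\ge 1-K_2(p\vee n)^{-1}\to 1$. Since $\bD_1$ and $\bD_2$ are independent, on $E_n$ the partial-regression model $\bY^1\sim\bX^1_{S_{+j}}$ is \emph{correctly specified}: the omitted coefficients $\{\beta^*_k:k\notin S_{+j}\}$ all vanish, so $\bE(Y_i\mid\bx_i)=A'(\olbx_{iS_{+j}}\bbeta^*_{S_{+j}})$ and $\bbeta^*_{S_{+j}}$ is the true parameter of a genuine exponential-family GLM in $d:=1+|S_{+j}|\le p_s+2$ covariates fitted on the $n_1$ samples of $\bD_1$. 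Because the stated conclusions are expressed only through $p_s$, it suffices to prove them conditionally on an arbitrary admissible $S$ (with $S^*\subseteq S$, $|S|=p_s$) and then integrate over $\bD_2$; henceforth all stochastic statements refer to $\bD_1$. Under \ref{a1}--\ref{a2} the design is bounded, $\Sigma_{S_{+j}}=\bE(\olbx_{S_{+j}}^\rT\olbx_{S_{+j}})$ has eigenvalues in $[c_{\min},c_{\max}]$, the linear predictor $\olbx_{S_{+j}}\bbeta^*_{S_{+j}}$ is bounded, and hence (by the exponential-family form (\ref{glm1}) with smooth positive $A''$) $A'',A'''$ are bounded away from $0$ and $\infty$ on the relevant range while all moments of $Y_i-\mu_i$, $\mu_i:=A'(\olbx_{iS_{+j}}\bbeta^*_{S_{+j}})$, are uniformly bounded. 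This places the problem inside the diverging-dimension M-estimation theory of \citet{portnoy1985asymptotic,he2000parameters}, whose results I will use.

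For part~(i) I would run the usual quadratic-expansion argument. The score at the truth obeys $\bE\|U_{S_{+j}}(\bbeta^*_{S_{+j}})\|^2=n_1^{-1}\bE[(Y-\mu)^2\|\olbx_{S_{+j}}\|^2]=O(d/n_1)=O(p_s/n_1)$, so $\|U_{S_{+j}}(\bbeta^*_{S_{+j}})\|=O_p(\sqrt{p_s/n_1})$. The random Hessian $\widehat{I}_{S_{+j}}(\bbeta)$ concentrates in operator norm around $I^*_{S_{+j}}=\bE\widehat{I}_{S_{+j}}(\bbeta^*_{S_{+j}})$ at rate $O_p(\sqrt{p_s\log p_s/n_1})$ (a matrix-Bernstein/covering bound for bounded-entry matrices), and Lipschitz continuity of $A''$ gives $\|\widehat{I}_{S_{+j}}(\bbeta_1)-\widehat{I}_{S_{+j}}(\bbeta_2)\|_{\mathrm{op}}\le C\sqrt{p_s}\,\|\bbeta_1-\bbeta_2\|_2\,\lambda_{\max}(n_1^{-1}({\overline{\bX}^1_{S_{+j}}})^\rT\overline{\bX}^1_{S_{+j}})$; hence, when $p_s\log p_s/n\to 0$, $\lambda_{\min}(\widehat{I}_{S_{+j}}(\bbeta))$ is bounded below by a fixed positive constant uniformly over a fixed-radius ball about $\bbeta^*_{S_{+j}}$. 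Convexity of $\ell_{S_{+j}}$ then forces the MLE into that ball with probability tending to one and yields $\|\widetilde{\bbeta}^1-\bbeta^*_{S_{+j}}\|\le C\|U_{S_{+j}}(\bbeta^*_{S_{+j}})\|=O_p(\sqrt{p_s/n_1})$, which gives~(i).

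For part~(ii) I would use the one-step (Bahadur) expansion $\widetilde{\bbeta}^1-\bbeta^*_{S_{+j}}=-\widehat{I}_{S_{+j}}(\bar\bbeta)^{-1}U_{S_{+j}}(\bbeta^*_{S_{+j}})$ with $\bar\bbeta$ on the segment joining $\widetilde{\bbeta}^1$ and $\bbeta^*_{S_{+j}}$, and split the $j$th coordinate (selected by $\be_j$) into the linear term $L_n:=-\be_j^\rT(I^*_{S_{+j}})^{-1}U_{S_{+j}}(\bbeta^*_{S_{+j}})=n_1^{-1}\sum_{i\in\bD_1}(Y_i-\mu_i)\,\be_j^\rT(I^*_{S_{+j}})^{-1}\olbx_{iS_{+j}}$ plus a remainder $R_n:=-\be_j^\rT\{\widehat{I}_{S_{+j}}(\bar\bbeta)^{-1}-(I^*_{S_{+j}})^{-1}\}U_{S_{+j}}(\bbeta^*_{S_{+j}})$. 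The term $L_n$ is a sum of independent mean-zero variables with $\var(L_n)=n_1^{-1}\be_j^\rT(I^*_{S_{+j}})^{-1}\bE[\nu(\mu)\olbx_{S_{+j}}^\rT\olbx_{S_{+j}}](I^*_{S_{+j}})^{-1}\be_j=n_1^{-1}\{(I^*_{S_{+j}})^{-1}\}_{jj}=:n_1^{-1}\sigma_j^{*2}$, bounded between positive constants; since $|\be_j^\rT(I^*_{S_{+j}})^{-1}\olbx_{iS_{+j}}|^2=O(p_s)$ and all moments of $Y-\mu$ are bounded, the Lindeberg condition holds and $\sqrt{n_1}\,L_n/\sigma_j^*\to N(0,1)$. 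For $R_n$, writing $\widehat{I}^{-1}-(I^*)^{-1}=-\widehat{I}^{-1}(\widehat{I}-I^*)(I^*)^{-1}$ and combining $\|\widehat{I}_{S_{+j}}(\bar\bbeta)-I^*_{S_{+j}}\|_{\mathrm{op}}=O_p(p_s/\sqrt{n_1}+\sqrt{p_s\log p_s/n_1})$ (concentration plus Lipschitz times the part~(i) rate), $\|(I^*_{S_{+j}})^{-1}U_{S_{+j}}(\bbeta^*_{S_{+j}})\|=O_p(\sqrt{p_s/n_1})$ and $\|\widehat{I}_{S_{+j}}(\bar\bbeta)^{-1}\|_{\mathrm{op}}=O_p(1)$ gives $\sqrt{n_1}\,|R_n|=o_p(1)$ once $p_s^2\log p_s/n\to 0$. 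Finally, the same concentration/Lipschitz arguments evaluated at the MLE give $\widetilde{\sigma}_j^2/\sigma_j^{*2}=\{\widehat{I}_{S_{+j}}^{-1}\}_{jj}/\{(I^*_{S_{+j}})^{-1}\}_{jj}\to 1$ in probability, so Slutsky's theorem gives $\sqrt{n_1}(\widetilde{\beta}_j-\beta^*_j)/\widetilde{\sigma}_j\to N(0,1)$ on $E_n$; integrating over $\bD_2$ finishes the proof.

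The step I expect to be the main obstacle is the remainder bound in part~(ii): the crude operator-norm estimate above actually appears to need something closer to $p_s^3/n\to 0$, so obtaining the stated $p_s^2\log p_s/n\to 0$ requires exploiting cancellation rather than a single worst-case bound on $\be_j^\rT\widehat{I}^{-1}(\widehat{I}-I^*)(I^*)^{-1}U$ --- e.g.\ separating $\widehat{I}-I^*$ into its value at the truth (a centered bounded-entry sum, genuinely $O_p(\sqrt{p_s/n_1})$ and interacting only with the single fixed direction $\be_j$) and its $\bar\bbeta$-versus-$\bbeta^*_{S_{+j}}$ increment (controlled by the part~(i) rate), and invoking the sharper Bahadur representation for linear functionals of increasing-dimension M-estimators in \citet{he2000parameters}. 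A secondary technical point is making all bounds uniform over the admissible selected sets $S$ so that the conditioning-on-$\bD_2$ reduction goes through cleanly.
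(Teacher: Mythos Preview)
Your proposal is correct and follows essentially the same route as the paper: condition on $\bD_2$, use independence so that on the sure-screening event the partial regression on $\bD_1$ is a correctly specified GLM in $O(p_s)$ parameters, and then appeal to the diverging-dimension M-estimation theory of \citet{he2000parameters}. The only presentational difference is that the paper treats \citet{he2000parameters} as a black box---it verifies their conditions (C1)--(C5) and reads off their Theorem~2.1 for part~(i) and their Theorem~2.2 for the Bahadur representation $\widetilde{\bbeta}^1-\bbeta^*_S=-\widehat{I}_S^{-1}U_S(\bbeta^*_S)+r_n$ with $\|r_n\|^2=o_p(1/n)$, from which part~(ii) follows by projecting onto $\be_j$---whereas you re-derive those steps from scratch. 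Your honestly flagged obstacle (the crude operator-norm bound on the remainder seems to demand $p_s^3/n\to 0$) is precisely what their Theorem~2.2 handles for you at the sharper $p_s^2\log p_s/n\to 0$ rate, so once you invoke it there is nothing left to do; your secondary concern about uniformity in $S$ is addressed in the paper by the single remark that the $o_p(1/n)$ bound on $\|r_n\|^2$ from \citet{he2000parameters} does not depend on the particular $S$, $p_s$, or $j$.
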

	
	\begin{theorem}\label{mainthm}
		Given model (\ref{glm1}) and under assumptions \ref{a1}---\ref{a3}
{and a partial orthogonality condition that $\{x_j, j \in S^*\}$ are independent of
$\{x_k, k \notin S^*\}$,} consider the smoothed estimator $\widehat{\bbeta}=(\widehat{\beta}_0,\widehat{\beta}_1,\ldots,\widehat{\beta}_p)^\rT$ as defined in (\ref{smest}). For each $j$,
define $\check{\sigma}_j^2 = \left( \{ {I}^*_{S^*_{+j}}\}^{-1}  \right)_{jj}$. Then, as $n,B\rightarrow \infty$,
		\begin{equation*}
		\sqrt{n}(\widehat{\beta}_j - \beta^*_j)/\check{\sigma}_j \stackrel{d}{\rightarrow} N(0, 1).
		\end{equation*}
	\end{theorem}
	
{The added partial orthogonality condition for Theorem \ref{mainthm} is a technical assumption for the validity of the theorem, which has been assumed in the high dimensional literature \citep{fan2008sure,fan2010sure,wang2014adaptive}. However, our numerical experiments suggest the robustness of our results to the violation of this condition.} In addition, while both of the one-time estimator $\widetilde{\beta}_j$ and the SSGLM estimator $\widehat{\beta}_j$ possess asymptotic consistency and normality, the key {advantage of $\widehat{\beta}_j$ over $\widehat{\beta}_j$ } {lies in} {the} efficiency. An immediate observation is that $\widehat{\beta}_j$ is estimated using all $n$ samples but $\widetilde{\beta}_j$ is estimated with only $n_1$ samples, which explains the different normalization constants in their respective variances, $\widehat{\sigma}^2_j/n$ and $\widetilde{\sigma}^2_j/n_1$. In addition,
with $\widetilde{\sigma}^2_j$  depending solely on a one-time variable selection $S$, its variability is high given the wide variability of $S$. On the other hand, $\widehat{\sigma}^2_j$ implicitly averages over the multiple  selections, $S^b$'s,  and gains  efficiency via  ``the effect of bagging'' \citep{buhlmann2002analyzing}; also see Web Table 1 of \cite{fei2019drawing} for empirical evidence under the linear regression setting. Moreover, the high variability of $\widetilde{\beta}_j$  may lead to a large false positive rate; see Figure 1 of \cite{fan2008sure}.


{We defer the proofs to the Appendix, but provide some intuition here. The randomness of the selection $\widehat{S}_{\lambda}$ presents  difficulties when developing the theoretical properties, but why sure screening works is that, given any subset $S \supseteq S^*$, the estimator $\widetilde{\bbeta}_S$ is consistent, though less efficient (with additional noise variables) than the ``oracle estimator'' $\widetilde{\bbeta}_{S^*}$ acting upon the true active set.  }
	The proof  also shows that $\widehat{\sigma}_j^2$ depends on the unknown $S^*$,  taking into account the variation in $B$ random splits. Therefore,  direct computation of $\widehat{\sigma}_j^2$ {in an} analytical form is not feasible.  Alternatively, we   estimate  the variance component via the infinitesimal jackknife method \citep{efron2014estimation,fei2019drawing}.
	
	\section{Variance Estimator and Inference by SSGLM}\label{s4}
	
	The infinitesimal jackknife method has been applied to estimate the variance of {the} bagged estimator with bootstrap-type resampling (sampling with replacement) \citep{efron2014estimation,fei2019drawing}.
	The idea is to treat each $\widetilde{\beta}^b_j$ as a function of the sub-sample $\bD_1^b$, or its	empirical distribution represented by the sampling indicator vector $\bJ_b = \left(J_{b1}, J_{b2}, \ldots, J_{bn} \right)$, where $J_{bi}\in\{0,1\}$ is {an} indicator {of} whether {the} $i^{th}$ observation is sampled in $\bD_1^b$. We further denote $J_{\cdot i} = \left(\sum_{b=1}^{B}J_{bi}\right)/B$. With slightly overused {notation,}  {let}
	\begin{gather*}
	    \widetilde{\beta}^b_j = t(\bD_1^b ) = t( \bJ_b;\bD^{(n)} );\\
	    \widehat{\beta}_j = \frac{1}{B}\sum_{b=1}^{B}\widetilde{\beta}^b_j  \stackrel{p}{\rightarrow} \bE^* t( \bJ_b;\bD^{(n)} ),\; \text{as } B \rightarrow \infty,
	\end{gather*}
	where {$t(\cdot)$ is a general function that maps the data to the estimator,} the expectation $\bE^*$ and the convergence are with respect to the probability measure induced by the randomness of $\bJ_b$'s. We can generalize the infinitesimal jackknife to estimate the variance, $\var\left(\widehat{\beta}_j \right)$, {analogous} to equation (8) of \cite{wager2018estimation}, as {follows}
	\begin{equation} \label{var1}
	\widehat{V}_j=\frac{n-1}{n} \left(\frac{n}{n-n_1}\right)^2 \sum_{i=1}^{n}\widehat{\mathrm{cov}}^2_{ij},
	\end{equation}
	where
	\begin{equation*}
	\widehat{\mathrm{cov}}_{ij}=\frac{1}{B} \sum_{b=1}^{B}\left( J_{bi}- J_{\cdot i}\right) \left( \widetilde{\beta}^b_j-\widehat{\beta}_j \right)   
	\end{equation*}
	is the covariance between the estimates $\widetilde{\beta}^b_j$'s and the sampling indicators $J_{bi}$'s with respect to the $B$ {splits}.
	Here, $n(n-1)/(n-n_1)^2$ is a finite-sample correction term with respect to the sub-sampling scheme. {Theorem 1 of} \cite{wager2018estimation} implies  that this variance estimator is consistent, in the sense that $\widehat{V}_j/\var\left(\widehat{\beta}_j \right)\xrightarrow{p} 1$  as $B\rightarrow \infty$.
	
	We further propose a  bias-corrected version of (\ref{var1}):
	\begin{gather} \label{var2}
	\widehat{V}_j^B = \widehat{V}_j - \frac{n}{B^2}\frac{n_1}{n-n_1}\sum_{b=1}^{B}(\widetilde{\beta}^b_j-\widehat{\beta}_j)^2.
	\end{gather}
	The derivation is similar to that in Section 4.1 of \cite{wager2014confidence}, but {it} is adapted to the sub-sampling scheme.
	The difference between $\widehat{V}_j$ and $\widehat{V}_j^B$ converges to zero with $n, B \rightarrow \infty$, as it can be re-written as
    $\frac{n}{B}\frac{n_1}{n - n_1} \widehat{v}_j,$
    {where} $\widehat{v}_j = B^{-1} \sum_{b=1}^{B}(\widetilde{\beta}^b_j-\widehat{\beta}_j)^2$ is the sample variance of  $\widetilde{\beta}^b_j$'s from $B$ {splits}. Thus both variance estimators are asymptotically equal. {See Algorithm \ref{alg2} for a summarized procedure of computing the bias-corrected variance estimates.}

{For finite samples, we give the order of $B$ to control the Monte Carlo errors of these two variance estimators.  First, with $n_1 = qn${ for} a fixed $0<q<1$,  the bias of $\widehat{V}_j$ is of order ${n\widehat{v}_j}/{B}$ \citep{wager2014confidence}.  Thus, setting $B = O(n^{1.5})$  will reduce the bias to the desired level of $O(n^{-0.5})$. On the other hand, $\widehat{V}_j^B$ effectively removes {this} bias, {as} it only requires $B = O(n)$ to control the Monte Carlo {Mean Squared Error (MSE)} to  $O(n^{-1})$  \citep{wager2014confidence}.  A comparison between $\widehat{V}_j$ and $\widehat{V}_j^B$, {given} in Simulation {\bf Example 1}, also shows the preference of  $\widehat{V}_j^B$ to $\widehat{V}_j$.}

 
	For $0 < \alpha <1$, the asymptotic $100(1-\alpha)\%$ confidence interval for  $\beta^*_j, j=1, \ldots, p,$  is given by
	\begin{equation*}
	\Big( \widehat{\beta}_j - \Phi^{-1}(1-\alpha/2) \sqrt{\widehat{V}_j^B} \; , \; \widehat{\beta}_j + \Phi^{-1}(1-\alpha/2) \sqrt{\widehat{V}_j^B} \Big),
	\end{equation*}
	{and} the p-value {for} testing $H_0: \beta^*_j = 0$ is
	\begin{equation*}\label{pv}
	2\times \Big\{1-\Phi\left({|\widehat{\beta}_j|/ \sqrt{\widehat{V}_j^B} }\right)\Big\},
	\end{equation*}
	where $\Phi$ is the CDF of the standard normal distribution.
	
	\section{Extension to Subvectors With Fixed Dimensions}\label{s5}

	We extend {the} SSGLM {procedure} to derive confidence regions for a subset of predictors and to test for {contrasts} of interest.
	Consider $\bbeta^*_{S^{(1)}}$ with $|S^{(1)}|=p_1\ge 2$, which is finite and does not increase with $n$ or $p$. Accordingly, the SSGLM estimator for it is presented in \textbf{Algorithm \ref{alg3}},
	and the extension of Theorem \ref{mainthm} is stated below.
	\begin{theorem}\label{thm3}
		Given model (\ref{glm1}) under assumptions \ref{a1}---\ref{a3} and a fixed finite subset $S^{(1)}\subset \{1,2,\ldots,p\}$ with $|S^{(1)}|=p_1$,  let  
		$\widehat{\bbeta}^{(1)}$ be the smoothed estimator for $\bbeta^*_{S^{(1)}}$ as defined in Algorithm \ref{alg3}. 
		Then as $n,B\rightarrow \infty$,
		\begin{equation*}
		\sqrt{n}  {I}^{(1)} \left(\widehat{\bbeta}^{(1)} - \bbeta^*_{S^{(1)}} \right)  \stackrel{d}{\rightarrow} N(0, \bI_{p_1}), 
		\end{equation*}
		where $\bI_{p_1}$ is a $p_1 \times p_1$  identity matrix, and  $	{I}^{(1)}$ is a $p_1 \times p_1$ positive definite matrix depending on $S^{(1)}$ and $S^*$ and is defined in the proof. 
	\end{theorem}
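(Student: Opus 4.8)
The plan is to reduce Theorem~\ref{thm3} to the univariate results (Theorems~\ref{thm1} and~\ref{mainthm}) by way of the Cram\'er--Wold device, and then to identify and consistently estimate the limiting covariance matrix $\widehat{I}^{(1)}$. Throughout, note that because $p_1$ is fixed and finite, the side conditions appearing in Theorem~\ref{thm1} ($p_s^2\log p_s/n\to0$, etc.) remain in force for the augmented models used here, and we may treat $S^{(1)}$ as a bounded perturbation of the selected set.

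First I would analyze a single split with selected set $S$. Since $|S\cup S^{(1)}|=o(n)$ and, by the sure screening property \ref{a3}, $S^*\subseteq S$ with probability at least $1-K_2(p\vee n)^{-1}$, on that event every low-dimensional model used to produce $\widetilde{\bbeta}^{(1),b}$ is correctly specified, and $\bbeta^*$ retains the correct interpretation for the coordinates in $S^{(1)}$ because the extra selected variables carry true coefficient zero; hence the one-time estimator targets $\bbeta^{(1)}=\bbeta^*_{S^{(1)}}$. As in the proof of Theorem~\ref{thm1}, invoking the diverging-dimension MLE expansions of \citet{portnoy1985asymptotic,he2000parameters}, each coordinate $\widetilde{\beta}^b_j$ with $j\in S^{(1)}$ is a smooth functional of the empirical distribution of $\bD_1^b$, so the vector $\widetilde{\bbeta}^{(1),b}$ admits a joint linearization with a vector-valued influence function; conditional on $S^b$ this gives $\sqrt{n_1}(\widetilde{\bbeta}^{(1),b}-\bbeta^{(1)})$ jointly asymptotically normal with remainder $o_p(1)$.

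Next I would handle the smoothing. As $B\to\infty$ the SSGLM estimator $\widehat{\bbeta}^{(1)}$ converges to $\bar{\bbeta}^{(1)}_n=\E[\widetilde{\bbeta}^{(1),b}\mid \bD^{(n)}]$, an average over the random splits, the Monte-Carlo discrepancy being controlled exactly as in Section~\ref{s4}. For any fixed $\bc\in\mathbb{R}^{p_1}$, the scalar $\bc^\rT\bar{\bbeta}^{(1)}_n$ is a bagged average of $\bc^\rT\widetilde{\bbeta}^{(1),b}$, and the argument establishing Theorem~\ref{mainthm} applies verbatim: the H\'ajek/ANOVA projection of this bagged statistic onto sums of i.i.d.\ contributions captures the leading $O(n^{-1/2})$ term, the higher-order U-statistic remainder is $o_p(n^{-1/2})$, and the randomness of the $S^b$ is handled by conditioning and the screening bound of \ref{a3}. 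Hence $\sqrt{n}\,\bc^\rT(\bar{\bbeta}^{(1)}_n-\bbeta^{(1)})\to N(0,\bc^\rT\Sigma^{(1)}\bc)$, and Cram\'er--Wold yields $\sqrt{n}(\widehat{\bbeta}^{(1)}-\bbeta^{(1)})\to N(0,\Sigma^{(1)})$. Here, in parallel with the description of $\sigma_j^2$ following Theorem~\ref{mainthm}, $\Sigma^{(1)}$ is an expectation over the selection distribution of the $S^{(1)}$-blocks of the inverse sub-informations $\{I^*_{S\cup S^{(1)}}\}^{-1}$, together with a term accounting for the data-driven variability of the selected sets; its positive-definiteness follows from \ref{a2}, because the eigenvalues of every $I^*_{S\cup S^{(1)}}$ are bounded away from $0$ and $\infty$, a property inherited by the relevant principal blocks of their inverses and preserved under the expectation.

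Finally I would prove $\widehat{I}^{(1)}\xrightarrow{p}\Sigma^{(1)}$ by extending Algorithm~\ref{alg2} to the matrix-valued setting: replace $\widehat{\mathrm{cov}}^2_{ij}$ by the outer products of the coordinatewise covariance vectors $(\widehat{\mathrm{cov}}_{ij})_{j\in S^{(1)}}$, apply the same finite-sample and bias corrections, and invoke the nonparametric functional delta method \citep{efron2014estimation,wager2014confidence,wager2018estimation} exactly as for $\widehat{V}^B_j$. Slutsky's theorem, together with continuity of $\Sigma\mapsto\Sigma^{-1/2}$ at positive definite $\Sigma$, then delivers the stated result. I expect the third step to be the main obstacle: one must show that the ANOVA projection of the bagged estimator is accurate to order $o_p(n^{-1/2})$ \emph{jointly} across all $p_1$ coordinates while each summand depends on the data-dependent, random set $S^b$. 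This is precisely where the sure-screening assumption \ref{a3} and the uniform control of the diverging-dimension MLE remainders afforded by \ref{a1}--\ref{a2} are indispensable.
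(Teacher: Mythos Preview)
Your approach differs substantially from the paper's, which is far more direct. The paper does not use Cram\'er--Wold, H\'ajek projections, or U-statistic remainders at all. Instead it simply replays the oracle decomposition from the proof of Theorem~\ref{mainthm} at the vector level: define oracle estimators $\check{\bbeta}_{S^{(1)}}$ and $\check{\bbeta}^b_{S^{(1)}}$ as the $S^{(1)}$-block of the MLE fitted on the index set $S^{(1)}\cup S^*$ (full data and subsample $b$, respectively), observe that $\sqrt{n}\,(\check{\bbeta}_{S^{(1)}}-\bbeta^{(1)})$ is already jointly asymptotically normal with covariance $\widehat{I}_{S^{(1)}|S^*}$ by the diverging-dimension CLT of Theorem~\ref{thm1}, and then show $\sqrt{n}\,(\widehat{\bbeta}^{(1)}-\check{\bbeta}_{S^{(1)}})=o_p(1)$ via exactly the same $\rI$--$\rI\rI$--$\rI\rI\rI$ split used for Theorem~\ref{mainthm}. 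The limiting covariance $\widehat{I}^{(1)}$ is thus identified explicitly as the oracle quantity $\widehat{I}_{S^{(1)}|S^*}$, depending only on $S^{(1)}$ and $S^*$.

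Consequently your description of $\Sigma^{(1)}$ as ``an expectation over the selection distribution of the $S^{(1)}$-blocks of the inverse sub-informations \ldots\ together with a term accounting for the data-driven variability of the selected sets'' misses the paper's key simplification: the oracle comparison shows that the selection randomness contributes nothing at first order (terms $\rI\rI$ and $\rI\rI\rI$ are $o_p(1)$), so no averaging over the $S^b$'s enters the limit. Your third step, establishing consistency of a matrix-valued $\widehat{I}^{(1)}$ via the nonparametric delta method, is likewise extraneous to Theorem~\ref{thm3} as stated, which only asserts existence of a suitable normalizing matrix and exhibits it in the proof. Your Cram\'er--Wold route is not incorrect in principle, but it imports heavier machinery and a more complicated (and in this case inaccurate) picture of the limiting covariance, whereas the paper obtains the result in a few lines by reusing the scalar argument verbatim with $j$ replaced by $S^{(1)}$.
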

	
	There is a direct extension of the one-dimensional infinitesimal jackknife for estimating	the variance-covariance matrix of $\widehat{\bbeta}^{(1)}$,
	$\widehat{\Sigma}^{(1)} = {\widehat{\mathrm{COV}}_{(1)}}^\mathrm{T} \widehat{\mathrm{COV}}_{(1)}$,
	where
	\begin{gather*}
	\widehat{\mathrm{COV}}_{(1)} = \Big(\widehat{\mathrm{cov}}^{(1)}_1,\widehat{\mathrm{cov}}^{(1)}_2,\ldots,\widehat{\mathrm{cov}}^{(1)}_n\Big)^\mathrm{T},\;\mathrm{with} \\
	\widehat{\mathrm{cov}}^{(1)}_i = \sum_{b=1}^{B}(J_{bi}-J_{\cdot i})(\widehat{\bbeta}^b_{S^{(1)}}-\widehat{\bbeta}^{(1)})/B.
	\end{gather*}
	
	{To} test 
	$H_0: Q \bbeta^{(1)} = R,$  where $Q$ is an $r\times p_1$ {contrast} matrix and $R$ is an $r\times 1$ vector{, a} Wald-type test statistic can be formulated as
	\begin{equation}\label{test1}
	T = \left(Q\widehat{\bbeta}^{(1)} - R \right)^\rT\left[ Q\widehat{\Sigma}^{(1)} Q^\rT \right]^{-1}  \left(Q\widehat{\bbeta}^{(1)} - R \right),   
	\end{equation}
	which follows $\chi_r^2$ under $H_0$. Therefore, with a significance level  $\alpha \in (0,1)$, we reject $H_0$ when $T$ is larger than the $(1-\alpha)\times 100$ percentile of $\chi_r^2$.
	
\begin{algorithm}
	\caption{SSGLM for Subvector $\bbeta^{(1)}$\label{alg3}}
	\begin{algorithmic}[1]
		\REQUIRE A selection procedure $\mathcal{S}_\lambda$
		\INPUT Data $(\bY,\bX)$,   a data splitting proportion  ${q} \in (0,1)$,  the number of {splits} $B$, and an index set $S^{(1)}$ for the predictors of interest
		\OUTPUT Estimates of the coefficients of predictors indexed by $S^{(1)}$,    $\widehat{\bbeta}^{(1)}$
		\FOR{$b=1,2,\ldots,B$}
		Split the samples {into} two parts $\bD_1$ and $\bD_2$, with  $|\bD_1|={q} n$ and $|\bD_2|=(1- {q} )n$
		\STATE Apply $\mathcal{S}_\lambda$ {to} $\bD_2$ to select a subset of important predictors $S\subset \{1, \ldots,p\}$
		\STATE Fit a GLM by regressing $\bY^1$  on $\bX^1_{S^{(1)}\cup S}$, where $\bD_1 = (\bY^1,\bX^1)$ and compute the MLEs, denoted by $\widetilde{\bbeta}^{(1)}$
		\STATE Define $\widetilde{\bbeta}_{S^{(1)}}^b = \left(\widetilde{\bbeta}^{(1)} \right)_{S^{(1)}}$
		\ENDFOR
		\STATE Compute $\widehat{\bbeta}^{(1)}= \left(\sum_{b=1}^{B}\widetilde{\bbeta}^b_{S^{(1)}} \right)/B$
	\end{algorithmic}
\end{algorithm}

	\section{Simulations}\label{s6}
	
	We  compared the finite sample performance of the proposed SSGLM procedure, under various settings,  with two existing methods, the de-biased LASSO for GLMs \citep{van2014asymptotically,dezeure2015high} and the de-correlated score test \citep{ning2017general}, in  estimation accuracy and computation efficiency. We also investigated how  the choice of ${q} = n_1/n$, the splitting proportion, may impact the performance of SSGLM, explored various selection methods as part of the SSGLM procedure and their impacts on  estimation and inference,  illustrated our method with both logistic and Poisson regression {settings}, and assessed {the} power and type I error {of the procedure}.
	 We adopted some challenging simulation settings in \cite{buhlmann2014high}. For example, the indices of the active set, as well as  the non-zero effect sizes, were randomly {generated}, and various correlation structures were {explored}.

	\textbf{Example 1} investigated the performance of SSGLM with various splitting proportions and the convergence of the proposed variance estimators. We set $n_1 = {q} n,\;{q} =0.1,0.2,\ldots,0.9$. Under a  linear regression model, $Y_i = \bx_i\bbeta + \varepsilon_i,\ i=1,2,\ldots,n$ with i.i.d.  $\varepsilon_i \sim N(0,1)$, we set $n=500,\;p=1,000,\;s_0=10$ with an AR(1)  correlation structure, i.e. $\Sigma_{ij} = \rho^{|i-j|}, \rho = 0.5,\; i, j = 1,2,\ldots,p$. 
	The indices in the active set $S^*$ randomly varied from $\{1, \ldots,p\}$, and the non-zero effects of $\beta^*_j,j\in S^*$ were generated from $\mathrm{Unif}[(-1.5,-0.5)\cup(0.5,1.5)]$. For each ${q}$,  we computed the MSE for $\widehat{\beta}_j^{(k)}$, the smoothed estimate {of} $\beta_j$ from the $k$-th simulation, $k=1,2,\ldots,K$,
	\begin{gather*}
	\mathrm{MSE}_j = \frac{1}{K}\sum_{k=1}^{K}(\widehat{\beta}_j^{(k)} - \beta^*_j)^2,\quad
	\mathrm{MSE}_{\mathrm{avg}} = \frac{1}{p} \sum_{j=1}^{p} \mathrm{MSE}_j.
	\end{gather*}
	{The left} panel of Figure \ref{Fig1} showed that the minimum MSE was achieved when ${q} =0.5$, suggesting the rationality of equal-size splitting in practice. 
	
	However, the  MSE was, in general, less sensitive to  ${q}$ when $q$ was getting larger, hinting that a large $n_1$ may lead to adequate accuracy. Intuitively, there {is} a minimum sample size $n_2 = (1-q)n$ required for the selections to achieve the ``sure screening'' property{. For} example, LASSO with smaller sample size would select less variables given the same tuning parameter. On the other hand, larger $n_1 = qn$ improves the power of the low dimensional GLM estimators directly. Thus the optimal split proportion is achieved when $n_1$ is as large as possible, while  $n_2$ is large enough for the sure screening selection to hold. This intuition is also validated in Figure \ref{Fig1}, as efficiency is gained faster at the beginning due to better GLM estimators with larger $n_1${. This} gain is {then} out{weighed} by the bias due to poor selections with small $n_2$. Our conclusion is that {an} optimal split proportion exists, but depends on {the} specific selection method, the true model size, and {other factors, rather than being fixed.}
	
	{We further examined the convergence of the two variance estimators $\widehat{V}_j$ and $\widehat{V}^B_j$ proposed in (\ref{var1}) and (\ref{var2}) with respect to the number of splits, $B$. Under the same setting, and with $q = 0.5$, we calculated both $\widehat{V}_j$ and $\widehat{V}^B_j$ for  $B = 100,\ 200,\ \ldots,\ 2,000$, and compared these estimates with the empirical variance of $\widehat{\beta}_j$'s (considered to be the {\em truth}) based on $200$ simulation replicates. The right panel of Figure \ref{Fig1} plots the averages over all signals $j = 1,2,\ldots,p$ and shows  $\widehat{V}_j$ converges to the truth  much slower than $\widehat{V}^B_j$, and $\widehat{V}^B_j$ has small biases even with a relatively small $B$. }
	
	\textbf{Example 2} implemented  various selection methods,  LASSO, SCAD, MCP, Elastic net, and Bayesian LASSO,  when conducting variable selection for SSGLM,  and compared their impacts on estimation and inference.  Ten-fold cross-validation was used for {the} tuning parameters in each selection procedure.	We assumed a Poisson model with $n=300$, $p=400$, and $s_0=5$. 
	For $i=1, \ldots,\ n,$
		\begin{equation}\label{pois}
	\log\Big(\bE(Y_i|\bx_i) \Big) = \beta_0 + \bx_i\bbeta.
	\end{equation}
	Table \ref{tab1} {reports} the selection frequency for each $j$ out of $B$ splits. Larger $|\beta^*_j|$ yielded {a} higher selection frequency. For example, predictors with an absolute effect larger than 0.6 were selected  frequently. The average size of the selected models by each method varied from $23$ (for LASSO) to $8$ (for Bayesian LASSO). However, in terms of {the bias}, coverage probabilities, and mean squared errors, the {impact} of {the} different variable selection methods seemed to be negligible{. Thus,}  SSGLM was fairly robust to the {choice} of variable selection {method}.
	
	\textbf{Example 3} also assumed  model (\ref{pois}).
	We set $n=400,\ p=500$, {and} $s_0=6$, with non-zero coefficients between 0.5 and 1, and three  correlation structures: Identity; AR(1) with $\Sigma_{ij} = \rho^{|i-j|}, \rho = 0.5$; Compound Symmetry (CS)  with $\Sigma_{ij} = \rho^{I(i\ne j)}, \rho = 0.5$.
	
	Table \ref{sim_pois} {shows} that SSGLM {consistently} provided nearly unbiased estimates. The obtained standard errors (SEs) were close to the empirical standard deviations (SDs), leading to confidence intervals with coverage probabilities that were close to the 95\% nominal level.

	\textbf{Example 4} assumed a logistic regression  model for binary outcomes, with $n=400,\ p=500$, and $s_0 = 4$,
	\begin{equation}\label{logit}
	\logit\Big(\mathbf{P}(Y_i=1|\bx_i) \Big) = \beta_0 + \bx_i\bbeta.
	\end{equation}
	The index set for predictors with nonzero coefficients, $S^* = \{218,\ 242,\ 269,\ 417 \}$, were randomly generated, and $\bbeta^*_{S^*} = (-2,\ -1,\ 1,\ 2)$.
	We {report} the performance of SSGLM when inferring {the} subvector $ \bbeta^*_{S^*}$, in Tables \ref{sim_logistic2} and \ref{sim_logistic3}. Our method gave nearly unbiased estimates under different correlation structures and sufficient power for {the} various contrasts.

	\textbf{Example 5}
	compared our method with the de-biased LASSO estimator \citep{van2014asymptotically} and the de-correlated score test \citep{ning2017general} in {terms of} power and type I error.
	We assumed model (\ref{logit}) with $n=200,\;p=300$, $s_0=3$, {and} $\bbeta^*_{S^*} = (2,\ -2,\ 2)$ with AR(1) correlation structures.
	Table \ref{eg5} {summarises} the power of detecting each true signal and the average type I error for the noise variables under the AR(1) correlation structure with four correlation values,  $\rho = 0.25, 0.4, 0.6, 0.75$.
	
	Our method {was shown} to be the most powerful,  while maintaining the type I error around the nominal $0.05$ level. The power was over $0.9$ for the first three scenarios and was above $0.8$ with $\rho = 0.75$.  The de-biased LASSO estimators controlled the type I error well, but the power dropped from $0.9$ to {approximately} $0.67$ as the correlation among the predictors increased. The de-correlated score tests had the least power and the highest type I error.
	{While these two competing methods have the same efficiency asymptotically, they do differ by specific implementations, for example, the choice of tuning parameters. Indeed, de-biased methods may be sensitive to tuning parameters, which could explain the gap in the finite sample performance. }
	
	Table \ref{eg5} {summarizes} the average computing time (in seconds) of the three methods per data set (R-3.6.2 on {an} 8-core MacBook Pro). On average, our method took  $17.7$ seconds, which was {the} fastest among the three methods. The other two methods were slower for the smaller $\rho$'s (75 and 37 seconds, respectively) and faster for the larger $\rho$'s (41 and 18 seconds, respectively), likely because the node-wise LASSO procedure that was used for estimating the precision matrix tended to be faster when handling more highly correlated predictors.
	
	
	
	\section{Data Example}\label{s7}
	
	
	We analyzed a subset of the BLCSC data \citep{Christiani2017blcs}, consisting of $n = 1,459$ individuals, among {whom} $708$ {were} lung cancer patients and $751$ {were} controls. {After cleaning, the data contained}  $6,829$ SNPs, along with important demographic variables including age, gender, race, education level, and smoking status (Table \ref{blcs_dem}). {As} smoking {is known to play} a {significant} role in {the development of} lung cancer, we were particularly interested in estimating the interactions between the SNPs and smoking status, in addition to {their} main effects.
	
	We assumed a high-dimensional logistic model with the {binary} outcome being {an indicator of} lung cancer {status. Predictors included} demographic variables, the SNPs (with prefix ``AX"), and the interactions between the SNPs and smoking status
	({with} prefix ``SAX"; $p=13,663$). 
	We applied {the} SSGLM with $B=1,000$ random {splits} and drew inference on all 13,663 predictors. Table \ref{blcs} {lists} the top predictors ranked by their p-values. We identified $9$ significant coefficients after using Bonferroni correction {for multiple comparisons}. All were  interaction terms, providing strong evidence of  SNP-smoking interactions, {which have} rarely {been} reported.
	These nine SNPs came from three genes, TUBB, ERBB2, and TYMS.  TUBB mutations {are} associated with both poor treatment response to paclitaxel-containing chemotherapy and poor survival in patients with advanced non-small-cell lung cancer (NSCLC) \citep{monzo1999paclitaxel,kelley2001genetic}. \cite{rosell2001predictive} has proposed using the presence of TUBB mutations as a basis for selecting initial chemotherapy for patients with advanced NSCLC. In contrast, intragenic ERBB2 kinase mutations occur more often in the adenocarcinoma  lung cancer {subtype} \citep{stephens2004lung,beer2002gene}.
	{Lastly}, advanced NSCLC patients with low/negative thymidylate synthase (TYMS) {are shown to} have better {responses} to Pemetrexed–{based chemotherapy} and longer progression free survival \citep{wang2013association}.
	
	For {comparisons}, we applied the de-sparsified estimator for GLM \citep{buhlmann2014high}. A direct application of the ``lasso.proj" function in the ``hdi" R package \citep{dezeure2015high} was not feasible given the data size. Instead, we used a shorter sequence of  candidate $\lambda$ values and $5$-fold instead of $10$-fold cross validation for the node-wise LASSO  procedure{. This procedure} costs {approximately} one day of CPU time. After correcting for multiple testing, there were two significant coefficients, both of which were interaction terms corresponding to SNPs AX.35719413\_C and AX.83477746\_A. Both SNPs were from the TUBB gene, and the first SNP was also identified by our method.

	To validate our findings, we applied the prediction accuracy measures for nonlinear models proposed in \cite{li2019prediction}. We calculated the $R^2$, the proportion of variation {explained} in $\bY$, for the models we chose to compare. We {report five} models {and their corresponding} $R^2$ {values}: \textbf{Model 1.{~}}the baseline model including only the demographic variables {($R^2 = 0.0938$)}; \textbf{Model 2.{~}}the baseline model plus the significant interactions after {the} Bonferroni correction {in} Table \ref{blcs} {($R^2 = 0.1168$)}; 
	\textbf{Model 3.{~}}Model 2 plus the main effects of its interaction terms {($R^2 = 0.1181$)}; 
	\textbf{Model 4.{~}}the baseline model plus the significant interactions from the de-sparsified LASSO method {($R^2 = 0.1018$)}; 
	\textbf{Model 5.{~}}Model 4 plus the corresponding main effects {($R^2 = 0.1076$)}. 
	Model 2 based on our method {explained} $25\%$ more variation in $\bY$ {than the baseline model} (from $0.0938$ to $0.1168$), while Model 4 based on the de-sparsified LASSO method only explains $8.5\%$ more variation (from $0.0938$ to $0.1018$). We also plotted {Receiver-Operating Characteristic} (ROC) curves {for} models 1, 2, and 4 (Figure \ref{roc}){. Their corresponding areas under the curves} (AUCs) were $0.645,\ 0.69,\text{ and } 0.668$, respectively.

	{Previous} literature has identified several SNPs as potential risk factors for lung cancer. We studied a controversial SNP, rs3117582, from the TUBB gene on chromosome 6. This SNP was identified in association with lung cancer risk in a case/control study by \cite{wang2008common}, while on the other hand,  \cite{wang2009role} found no evidence of association between the SNP and risk of lung cancer among \emph{never-smokers}. 
	Our goal was to test this SNP and its interaction with smoking in the presence of all the other predictors under the high dimensional logistic model.
Slightly {overusing} notation, we denoted the coefficients corresponding to rs3117582 and its interaction with smoking as $\bbeta^{(1)} = (\beta_1,\beta_2)$, and tested $H_0: \beta_1 = \beta_2 = 0$. Applying the proposed method, we obtained
	\begin{equation*}
	(\widehat{\beta}_1,\widehat{\beta}_2) = (-0.067,0.005),\;\widehat{\mathrm{COV}}\left(\widehat{\beta}_1,\widehat{\beta}_2\right) = \begin{pmatrix}
	0.44,\; -0.43\\
	-0.43,\; 0.50
	\end{pmatrix}.
	\end{equation*}
	The test statistic of the overall effect was $T = 0.062$ by (\ref{test1}) with a p-value of $0.97$, which concluded that, among the patients in  {BLCSC},  rs3117582 was not significantly related to lung cancer, regardless of the smoking status.

	\section{Conclusions}\label{s8}

		Our approach {for drawing} inference, by  adopting {a} ``split and smoothing'' idea, {improves} upon \cite{fei2019drawing} which used bootstrap resampling, and 
		 {recasts} a high dimensional inference problem into a sequence of low dimensional estimations. Unlike many of the existing methods \citep{zhang2014confidence,buhlmann2014high,javanmard2018debiasing}, our method is more computationally feasible as it does not require estimating  high dimensional precision matrices.
		{Our algorithm enables us to make full use of parallel computing for improved computational efficiency, because fitting the $p$ low dimensional GLMs and  randomly splitting the data $B$ times are both separable tasks, which can be implemented in parallel.}
		
		We have derived the variance estimator using the infinitesimal jackknife method adapted to the splitting and smoothing procedure \citep{efron2014estimation,wager2018estimation}{. This estimator} is free of parametric assumptions, resembles  bagging \citep{buhlmann2002analyzing}, and leads to confidence intervals with correct coverage probabilities. Moreover, we have relaxed the stringent {\it selection consistency} assumption on variable selection  as required in  \cite{fei2019drawing}. We have shown that our procedure works with a  mild {\it sure screening} assumption for the selection method. 
		
		
		There are open problems to be addressed. First, our method relies on {a} sparsity condition {for the} model parameters. We envision that relaxation of the condition may take a major effort, though our preliminary simulations (Example B.2 in Appendix B)  suggest that our procedure might work when the sparsity condition fails. Second, as our model is fully parametric, in-depth research is needed to develop {a} more robust {approach} when the model is mis-specified. Finally, while our procedure is  feasible when $p$ is large (tens of thousands), the computational  cost  increases substantially when $p$ is extraordinarily large (millions). Much effort is warranted to  enhance {its} computational efficiency. {Nevertheless, our work does provide a starting point for future investigations.}
		
		\section*{Acknowledgements}
		We are grateful towards Dr.~Boaz Nadler and  three referees for the insightful comments that have helped improve the manuscript. We thank 
		Stephen Salerno, Department of Biostatistics, University of Michigan, for proofreading the manuscript and for the edits that have bettered the presentation of the manuscript. We thank our long time collaborator, Dr.~David Christiani, Harvard Medical School, for providing the BLCSC data. The work is supported by  grants from NIH
		(R01CA249096,R01AG056764 and U01CA209414).

	

\bibliography{mybib}
	
\newpage
\begin{figure}
	\caption{Left: Average MSEs of all predictors at split proportions ${q}$'s from 0.1 to 0.9. Right: Convergence of two variance estimators as $B$ increases. \label{Fig1}}
	\centering
	\includegraphics[scale=0.42]{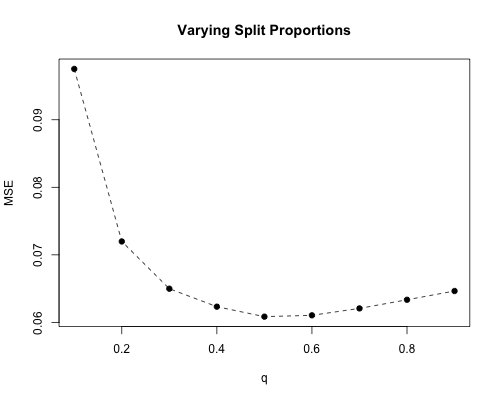}
	\includegraphics[scale=0.42]{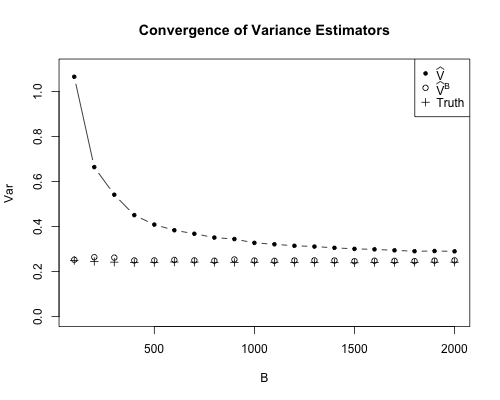}
\end{figure}

\begin{figure}
	\centering
	\caption{ROC curves of the three selected models.\label{roc}}
	\includegraphics[scale=0.75]{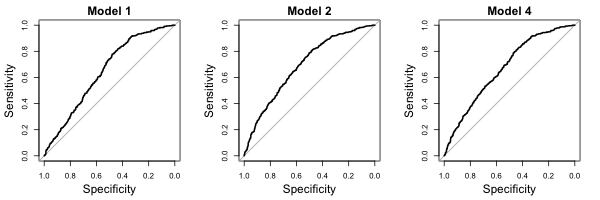}
\end{figure}

\begin{table}[ht]
	\centering
	\caption{Comparisons of different selection procedures to implement our proposed method. The first column is the indices of the non-zero signals. The last row for the selection frequency is the average number of predictors being selected by each procedure. The last row for the coverage probability is the average coverage probability of all predictors.\label{tab1}}
	\begin{tabular}{rrrrrrr}
		\hline
		Index $j$ & $\beta^*_j$ & LASSO & SCAD & MCP & EN & Bayesian  \\ 
		\hline
		\multicolumn{7}{c}{Selection frequency } \\ 
		12 & 0.4 & 0.59 & 0.55 & 0.49 & 0.60 & 0.60   \\ 
		71 & 0.6 & 0.93 & 0.92 & 0.90 & 0.95 & 0.94   \\ 
		351 & 0.8 & 0.99 & 0.99 & 0.99 & 1.00 & 1.00   \\ 
		377 & 1.0 & 1.00 & 1.00 & 1.00 & 1.00 & 1.00   \\ 
		386 & 1.2 & 1.00 & 1.00 & 1.00 & 1.00 & 1.00   \\ 
		Average model size && 23.12 & 13.15 & 10.89 & 10.31 & 7.98  \\ 
		\hline
		\multicolumn{7}{c}{Bias}   \\ 
		12 & 0.4 & 0.003 & 0.003 & 0.003 & 0.003 & 0.001 \\ 
		71 & 0.6 & 0.007 & 0.008 & 0.008 & 0.008 & -0.010 \\ 
		351 & 0.8 & -0.001 & 0.001 & 0 & 0 & 0.001 \\ 
		377 & 1.0 & -0.005 & -0.005 & -0.006 & -0.005 & 0.001 \\ 
		386 & 1.2 & 0.002 & 0.001 & 0.001 & 0.001 & 0.004 \\ 
		\hline
		\multicolumn{7}{c}{Coverage probability } \\ 
		12 && 0.90 & 0.90 & 0.91 & 0.91 & 0.95 \\ 
		71 && 0.94 & 0.94 & 0.95 & 0.94 & 0.94 \\ 
		351 && 0.95 & 0.95 & 0.95 & 0.94 & 0.95 \\ 
		377 && 0.94 & 0.93 & 0.93 & 0.94 & 0.92 \\ 
		386 && 0.94 & 0.95 & 0.95 & 0.95 & 0.94 \\ 
		Average && 0.93 & 0.94 & 0.94 & 0.94 & 0.94 \\ 
		\hline
		\multicolumn{7}{c}{MSE } \\ 
		12 && 0.111 & 0.110 & 0.110 & 0.109 & 0.106 \\ 
		71 && 0.104 & 0.103 & 0.102 & 0.102 & 0.101 \\ 
		351 && 0.103 & 0.103 & 0.103 & 0.103 & 0.100 \\ 
		377 && 0.101 & 0.100 & 0.100 & 0.100 & 0.109 \\ 
		386 && 0.097 & 0.096 & 0.096 & 0.096 & 0.102 \\ 
		Average && 0.105 & 0.104 & 0.103 & 0.103 & 0.102 \\ 
		\hline
	\end{tabular}
\end{table}

\begin{table}[ht]
	\centering
	\caption{SSGLM under the Poisson regression and three correlation structures. Bias, average standard error (SE), empirical standard deviation (SD), coverage probability (Cov prob), and selection frequency (Sel freq) are reported. The last column summarizes the average of all noise variables. \label{sim_pois}}
	\begin{tabular}{rrrrrrrrrr}
		\hline
		&Index $j$ & 0 (Int) & 74 & 109 & 347 & 358 & 379 & 438 & - \\ 
		&$\beta^*_j$ & 1.000 & 0.810 & 0.595 & 0.545 & 0.560 & 0.665 & 0.985 & 0 \\ 
		\hline
		Identity& Bias & -0.010 & 0 & 0 & 0.001 & 0.005 & 0.005 & 0.006 & 0 \\ 
		&SE & 0.050 & 0.035 & 0.034 & 0.035 & 0.035 & 0.034 & 0.035 & 0.034 \\ 
		&SD & 0.064 & 0.036 & 0.038 & 0.031 & 0.033 & 0.038 & 0.036 & 0.036 \\ 
		&Cov prob & 0.870 & 0.920 & 0.900 & 0.960 & 0.990 & 0.910 & 0.950 & 0.936 \\ 
		&Sel freq & 1.000 & 1.000 & 1.000 & 1.000 & 1.000 & 1.000 & 1.000 & 0.015 \\ 
		\hline 
		AR(1)	& Bias & 0.006 & 0.003 & -0.002 & -0.001 & -0.001 & -0.005 & 0.003 & 0 \\ 
		&SE & 0.052 & 0.035 & 0.035 & 0.035 & 0.035 & 0.035 & 0.035 & 0.035 \\ 
		&SD & 0.056 & 0.031 & 0.041 & 0.035 & 0.037 & 0.037 & 0.037 & 0.036 \\ 
		&Cov prob & 0.930 & 0.970 & 0.890 & 0.960 & 0.950 & 0.930 & 0.960 & 0.937 \\ 
		&Sel freq & 1.000 & 1.000 & 1.000 & 1.000 & 1.000 & 1.000 & 1.000 & 0.015 \\ 
		\hline
		CS	& Bias & -0.003 & -0.005 & 0.004 & -0.002 & 0.005 & -0.004 & -0.001 & 0.001 \\ 
		&SE & 0.033 & 0.043 & 0.043 & 0.042 & 0.043 & 0.043 & 0.044 & 0.042 \\ 
		&SD & 0.038 & 0.046 & 0.044 & 0.052 & 0.040 & 0.047 & 0.043 & 0.044 \\ 
		&Cov prob & 0.960 & 0.900 & 0.930 & 0.900 & 0.970 & 0.910 & 0.950 & 0.934 \\ 
		&Sel freq & 1.000 & 1.000 & 0.999 & 0.997 & 0.998 & 0.999 & 1.000 & 0.016 \\ 
		\hline
	\end{tabular}
\end{table}

\begin{table}[ht]
	\centering
	\caption{SSGLM under the logistic regression, with estimation and inference for the subvector $\bbeta^{(1)} = \bbeta_{S^*}$. We compare the SSGLM (left) to the oracle model (right), where the oracle estimator is from the low dimensional GLM given the true set $S^*$, and the empirical covariance matrix is with respect to the simulation replications. \label{sim_logistic2}}
	\begin{tabular}{rrrrr rrrrr}
		\toprule
		Index $j$ &  218 & 242 & 269 & 417 & Index $j$ & 218 & 242 & 269 & 417\\ 
		$\beta^*_j$ &  -2 & -1 & 1 & 2 & $\beta^*_j$ &  -2 & -1 & 1 & 2\\ 
		\hline
		\multicolumn{10}{c}{Identity} \\
		$\widehat{\bbeta}^{(1)}$ & -2.048 & -1.043 & 0.999 & 2.096 &Oracle & -1.995 & -1.026 & 0.973 & 2.043 \\  
		$\widehat{\Sigma}^{(1)}$ &0.146 & 0.010 & -0.009 & -0.020 &Empirical & 0.155 & 0.006 & -0.009 & -0.027 \\ 
		& 0.010 & 0.134 & -0.004 & -0.011 && 0.006 & 0.129 & -0.011 & -0.015 \\ 
		& -0.009 & -0.004 & 0.134 & 0.009 && -0.009 & -0.011 & 0.152 & 0.010 \\ 
		& -0.020 & -0.011 & 0.009 & 0.143 && -0.027 & -0.015 & 0.010 & 0.134 \\ 
		\hline
		\multicolumn{10}{c}{AR(1)} \\
		$\widehat{\bbeta}^{(1)}$  & -2.073 & -1.014 & 1.002 & 2.110 &Oracle & -2.024 & -0.991 & 0.977 & 2.062 \\ 
		$\widehat{\Sigma}^{(1)}$ & 0.145 & 0.012 & -0.011 & -0.023 	&Empirical &0.141 & 0.012 & -0.016 & -0.028 \\ 
		& 0.012 & 0.137 & -0.006 & -0.011 && 0.012 & 0.112 & -0.006 & 0 \\ 
		& -0.011 & -0.006 & 0.135 & 0.010  && -0.016 & -0.006 & 0.129 & 0.009 \\ 
		& -0.023 & -0.011 & 0.010 & 0.147  && -0.028 & 0 & 0.009 & 0.136 \\ 
		\hline
		\multicolumn{10}{c}{CS} \\
		$\widehat{\bbeta}^{(1)}$  & -2.095 & -1.033 & 1.070 & 2.102 &Oracle & -2.037 & -1.024 & 1.027 & 2.028 \\ 
		$\widehat{\Sigma}^{(1)}$ &  0.223 & -0.026 & -0.048 & -0.063 &Empirical & 0.192 & -0.030 & -0.044 & -0.045 \\  
		& -0.026 & 0.208 & -0.043 & -0.047 && -0.030 & 0.187 & -0.037 & -0.044 \\ 
		& -0.048 & -0.043 & 0.207 & -0.028 && -0.044 & -0.037 & 0.165 & -0.011 \\ 
		& -0.063 & -0.047 & -0.028 & 0.224 && -0.045 & -0.044 & -0.011 & 0.179 \\ 
		\hline
	\end{tabular}
\end{table}

\begin{table}
	\centering
	\caption{SSGLM under Logistic regression, with rejection rates of testing the contrasts. When the truth is $0$, the rejection rates estimate the type I error probability; when the truth is nonzero, they estimating the testing power. 
		\label{sim_logistic3}}
	\begin{tabular}{rrrrr}
		\toprule
		$H_0$ & Truth & Identity & AR(1) & CS\\
		\hline
		$\beta^*_{218} + \beta^*_{417} = 0$ & 0 & 0.05 & 0.04 & 0.03\\
		$\beta^*_{242} + \beta^*_{269} = 0$ & 0 & 0.06 & 0.04 & 0.025 \\
		$\beta^*_{218} + \beta^*_{269} = 0$ & $-1$ & 0.56 & 0.57 & 0.42\\
		$\beta^*_{242} + \beta^*_{417} = 0$ & 1 & 0.55  & 0.58 & 0.48\\
		$\beta^*_{242} = 0$ & $-1$ & 0.83 & 0.80 & 0.61\\
		$\beta^*_{269} = 0$ & 1 & 0.74 & 0.81 & 0.70\\
		$\beta^*_{218} = 0$ & $-2$ & 1 & 1 & 1\\
		$\beta^*_{417} = 0$ & 2 & 1 & 1 & 1\\
		\hline
	\end{tabular}
\end{table}

\begin{table}[ht]
	\centering
	\caption{Comparisons of SSGLM, Lasso-pro, and De-correlated score test (Dscore) in  power, type I error and computing time. AR(1) correlation structure with different $\rho$'s for $\bX$ are assumed. \label{eg5}}
	\begin{tabular}{rrrrr r}
		\hline
		& \multicolumn{3}{c}{Power} & Type I error & Time \\
		\hline
		Truth & $\beta^*_{10} = 2$ & $\beta^*_{20} = -2$ & $\beta^*_{30} = 2$ & $\beta^*_j = 0$ & (secs) \\ 
		\hline
		$\rho=0.25$ Proposed & 0.920 & 0.930 & 0.950 & 0.049 & 17.7\\ 
		Lasso-pro & 0.900 & 0.930 & 0.900 & 0.042 & 74.7\\ 
		Dscore	& 0.790 & 0.880 & 0.890 & 0.177 & 37.0\\ 
		\hline
		$\rho=0.4$ Proposed & 0.940 & 0.960 & 0.965 & 0.049 & 17.6\\ 
		Lasso-pro & 0.920 & 0.910 & 0.920 & 0.043  & 66.0\\ 
		Dscore	& 0.770 & 0.905 & 0.840 & 0.175  & 30.7\\  
		\hline
		$\rho=0.6$ Proposed  & 0.940 & 0.950 & 0.880 & 0.054 & 17.7\\ 
		Lasso-pro & 0.850 & 0.750 & 0.850 & 0.045  & 53.3\\ 
		Dscore	& 0.711 & 0.881 & 0.647 & 0.268 & 20.1\\ 
		\hline
		$\rho=0.75$ Proposed & 0.863 & 0.847 & 0.923 & 0.060  & 17.7\\ 
		Lasso-pro & 0.690 & 0.670 & 0.650 & 0.053 & 41.0\\ 
		Dscore	 & 0.438 & 0.843 & 0.530 & 0.400 & 17.9\\ 
		\hline
	\end{tabular}
\end{table}

\begin{table}[ht]
	\centering
	\caption{Demographic characteristics of the BLCSC SNP data.\label{blcs_dem}}
	\begin{tabular}{rrr}
		\hline
		& Controls ($751$) & Cases ($708$) \\ 
		\hline
		Race&&\\
		White & 726 & 668 \\ 
		Black &   5 &  22 \\ 
		Other &  20 &  18 \\ 
		\hline
		Education &&\\
		$<$High school &  64 &  97 \\ 
		High school & 211 & 181 \\ 
		$>$High school & 476 & 430 \\ 
		\hline
		Age &&\\
		Mean(sd) & 59.7(10.6)  & 60(10.8)\\
		\hline
		Gender && \\ 
		Female & 460 & 437 \\ 
		Male & 291 & 271 \\ 
		\hline
		Pack years &&\\
		Mean(sd) & 18.8(25.1)  & 46.1(38.4)\\
		\hline
		Smoking &&\\
		Ever & 498 & 643 \\ 
		Never & 253 &  65 \\ 
		\hline
	\end{tabular}
\end{table}

\begin{table}[ht]
	\caption{SSGLM fitted to the BLCSC data. SNP variables start with ``AX"; interaction terms start with ``SAX"; ``Smoke" is  binary (1=ever smoked, 0=never smoked). Rows are sorted by p-values. \label{blcs}}
	\centering
	\begin{tabular}{rrrrrrr}
		\toprule
		Variable & $\widehat{\beta}$ & SE & T & p-value & Adjusted P & Sel freq \\ 
		\hline
		SAX.88887606\_T & 0.33 & 0.02 & 17.47 & $<10^{-3}$ & $<0.01$ & 0.08 \\ 
		SAX.11279606\_T & 0.53 & 0.06 & 8.23 & $<10^{-3}$ & $<0.01$ & 0.00 \\ 
		SAX.88887607\_T & 0.29 & 0.04 & 6.97 & $<10^{-3}$ & $<0.01$ & 0.01 \\ 
		SAX.15352688\_C & 0.56 & 0.08 & 6.90 & $<10^{-3}$ & $<0.01$ & 0.01 \\ 
		SAX.88900908\_T & 0.54 & 0.09 & 5.95 & $<10^{-3}$ & $<0.01$ & 0.02 \\ 
		SAX.88900909\_T & 0.51 & 0.09 & 5.69 & $<10^{-3}$ & $<0.01$ & 0.02 \\ 
		SAX.32543135\_C & 0.78 & 0.14 & 5.49 & $<10^{-3}$ & $<0.01$ & 0.25 \\ 
		SAX.11422900\_A & 0.32 & 0.06 & 5.24 & $<10^{-3}$ & $<0.01$ & 0.09 \\ 
		SAX.35719413\_C & 0.47 & 0.10 & 4.63 & $<10^{-3}$ & 0.049 & 0.00 \\ 
		\hline 
		SAX.88894133\_C & 0.43 & 0.10 & 4.53 & $<10^{-3}$ & 0.08 & 0.00 \\ 
		SAX.11321564\_T & 0.47 & 0.11 & 4.44 & $<10^{-3}$ & 0.12 & 0.00 \\ 
		\ldots&&&&&& \\ 
		AX.88900908\_T & 0.40 & 0.11 & 3.84 & $<10^{-3}$ & 1.00 & 0.00 \\ 
		Smoke & 0.89 & 0.23 & 3.82 & $<10^{-3}$ & 1.00 & - \\ 
		\ldots&&&&&& \\ 
		\hline
	\end{tabular}
\end{table}

\clearpage

\newpage
\section*{Appendix A: Proofs of Theorems}

\begin{proof}of Theorem \ref{thm1}:
	
	From the data split, $\bD_1$ and $\bD_2$ are mutually exclusive, thus $S$, from $\bD_2$, is independent of $\bD_1 = (\bY^1,\bX^1)$. We show the asymptotics of $\widetilde{\bbeta}_{S_{+j} }$ in (\ref{onetime}) with a diverging number of parameters $p_s$, by using the techniques and results from \cite{he2000parameters,niemiro1992asymptotics}.
	Without loss of generality, and to simplify notation, we let $j=1\in S$. Then $S_{+j} = S$. The argument is the same	if $1\notin S$ and for any other $j$. 
	
	To proceed, we first restrict on the event of
	$\Omega= \{ S \supseteq S^*\}$. With Assumption (A3), $\pr(\Omega) \ge 1 - K_2(p\vee n_2)^{-1-c_2}.$
	Recall that
	\begin{gather*}
	\widetilde{\bbeta}_{S_{+j} } = 
	\argmin_{\bbeta \in \bR^{|S|+1} } \ell_{S} (\bbeta_{S}) = \argmin_{\bbeta \in \bR^{|S|+1} } \ell (\bbeta_{S};\bY^1,\bX^1_{S});\\
	\widetilde{\beta}_{1} = \left(\widetilde{\bbeta}_{S_{+j} } \right)_1.   
	\end{gather*}
	To apply Theorems 2.1 and 2.2 of \cite{he2000parameters} in the GLM case, we can verify that our Assumptions  \ref{a1} and \ref{a2} will lead to their conditions (C1), (C2), (C4) and (C5)  with $C=1, r=2$ and $A(n,p_s) = p_s$. To verify their (C3), we first note that their $D_n$ is our ${I}^*_{S}$.
	Then for any $\bbeta_{S}$, $ \alpha \in \bR^{p_s}$ such that $\|\alpha\|_2 = 1$, a second order Taylor expansion of $U_{S}(\bbeta_{S})$ around  $\bbeta^*_{S}$ leads to  
	\[
	\left| \alpha^\rT\bE_{\bbeta^*} \left( U_{S}(\bbeta_{S})  - U_{S}(\bbeta^*_{S})  \right) - \alpha^\rT{I}^*_{S} \left( \bbeta_{S} - \bbeta^*_{S}\right) \right| \le  O(\| \bbeta_{S} - \bbeta^*_{S}\|_2^2).
	\]
	Hence,
	\[\sup_{ \| \bbeta_{S} - \bbeta^*_{S}\|\le (p_s/n)^{1/2}} \left| \alpha^\rT\bE_{\bbeta^*} \left( U_{S}(\bbeta_{S})  - U_{S}(\bbeta^*_{S})  \right) - \alpha^\rT{I}^*_{S} \left( \bbeta_{S} - \bbeta^*_{S}\right) \right|\\
	 \le O(p_s/n) = o(n^{1/2}),
	\]
	which means their (C3) follows. Thus, by Theorem 2.1 of \cite{he2000parameters}, 
	\[\| \widetilde{\bbeta}_{S } - \bbeta^*_{S} \|_2^2 = o_p(p_s/n_1),
	\]given $p_s\log p_s/n_1 \rightarrow 0$.
		Furthermore, by Theorem 2.2 of \cite{he2000parameters}, if $p_s^2\log p_s/n_1 \rightarrow 0$, 
	then 
	\[\|n_1^{1/2} (\widetilde{\bbeta}_{S} -\bbeta^*_{S} ) +  n_1^{-1/2} \{{I}^*_{S} \}^{-1} U_{S}(\bbeta^*_{S})\|_2 = o_p(1).
	\]
	
Releasing the restriction on $\Omega$ and with $ \pr(\Omega^c) =\pr ( S \not\supseteq S^*) \le K_2(p\vee n_2)^{-1-c_2}$,  we would still have
$\| \widetilde{\bbeta}_{S } - \bbeta^*_{S} \|_2^2 =  o_p(p_s/n_1) $, given $p_s\log p_s/n_1 \rightarrow 0$.
To see this, for any $\epsilon>0$, we can consider
\begin{eqnarray*} 
&  & \pr ( \|(n_1/p_s)^{1/2} (\widetilde{\bbeta}_{S } - \bbeta^*_{S}) \|_2> \epsilon)  \\
& < & \pr ( \|(n_1/p_s)^{1/2} (\widetilde{\bbeta}_{S } - \bbeta^*_{S}) \|_2> \epsilon| \Omega) \pr(\Omega)
+ \pr(\Omega^c) \\
& < & \pr ( \|(n_1/p_s)^{1/2} (\widetilde{\bbeta}_{S } - \bbeta^*_{S}) \|_2> \epsilon| \Omega) + K_2(p\vee n_2)^{-1-c_2},
\end{eqnarray*}
where both terms in the last inequality converge to 0
as $n_1 \rightarrow \infty$ and $n_2=(1-q)n_1/q$, with $0<q<1$ a constant. Similarly, we can show
	$\|n_1^{1/2} (\widetilde{\bbeta}_{S} -\bbeta^*_{S} ) +  n_1^{-1/2} \{{I}^*_{S} \}^{-1} U_{S}(\bbeta^*_{S})\|_2 = o_p(1),$ if $p_s^2\log p_s/n_1 \rightarrow 0$, which can also be written as 
	\begin{equation} \label{junk2}
	\widetilde{\bbeta}_{S} -\bbeta^*_{S}  = - n_1^{-1} \{{I}^*_{S} \}^{-1} U_{S}(\bbeta^*_{S}) + r_{n_1},
	\end{equation}
	with $\|r_{n_1}\|_2^2 = o_p(1/n_1)$.
	Consequently, by taking $\alpha = (0,1,0,\ldots,0)^\rT$  and left-multiplying  both sides of (\ref{junk2}) by
	$n^{1/2} \alpha^\rT $, we have 
	\[\sqrt{n_1}\left(\widetilde{\beta}_{1} - \beta^*_1\right)/\widetilde{\sigma}_{1} \stackrel{d}{\rightarrow} N(0,1),\] where $\widetilde{\sigma}_1^2 = \left( \{{I}^*_{S}\}^{-1}  \right)_{11}.$ 
	
	
\end{proof}

The following lemma, which is needed for the proof of Theorem \ref{mainthm},  bounds the estimates of coefficients, when the selected subset $S^b$ misses important predictors in $S^*$ for some $ 1 \le b \le B$. Although $S^* \not\subseteq S^b$  with probability going to zero by Assumption \ref{a3}, we  need to establish an upper bound in order to control the bias of  $\widehat{\beta}_j$ for any $j$. 
\begin{lemma} \label{l1}
	With model (\ref{glm1}) and Assumptions \ref{a1} and \ref{a2}, consider the GLM estimator $\widetilde{\bbeta}_{S }$ with respect to subset $S$ as defined in (\ref{onetime}). Denote by $p_s = |S|$.  If $p_s\log p_s/n \rightarrow 0$, then with probability going to 1,
	$
	|\widetilde{\bbeta}_{S }|_\infty \le C_\beta,$
	where $C_\beta>0$ is a constant depending on $c_{\min}, c_{\max}, c_\beta$, and $A(0)$.
\end{lemma}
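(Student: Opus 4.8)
The plan is to show that $\widetilde{\bbeta}^1$ stays close to the \emph{population} partial-regression parameter and that the latter is bounded in $\ell_\infty$-norm. Write $L_S(\bbeta_S) = \bE\{A(\olbx_S\bbeta_S) - Y\,\olbx_S\bbeta_S\}$ and $\bbeta_S^\dagger = \argmin_{\bbeta_S} L_S(\bbeta_S)$. Since $A$ is convex with $A'' > 0$ and, by \ref{a2}, $\Sigma_S = \bE(\olbx_S^\rT\olbx_S)$ has smallest eigenvalue at least $c_{\min}$, $L_S$ is strictly convex and coercive, so $\bbeta_S^\dagger$ exists and is unique; as in the proof of Theorem \ref{thm1} we may take $j\in S$, so $S_{+j}=S$ and $p_s=|S|$. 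It then suffices to establish (i) $|\bbeta_S^\dagger|_\infty \le C_1$ for a constant $C_1$ depending only on $c_{\min},c_{\max},c_\beta,A(0)$, and (ii) $\|\widetilde{\bbeta}^1 - \bbeta_S^\dagger\| = o_p(1)$.

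For (i), two inequalities pin down $\bbeta_S^\dagger$. First, $L_S(\b0) = A(0)$ exactly (the linear term vanishes at $\b0$), hence $L_S(\bbeta_S^\dagger) \le A(0)$. Second, the partial model is nested in the full model, so $L_S(\bbeta_S^\dagger) = \min_{\bbeta:\ \bbeta_{S^c}=\b0}\bE\ell(\bbeta) \ge \bE\ell(\bbeta^*)$, and therefore the excess loss $A(0) - L_S(\bbeta_S^\dagger)$ is bounded above by a constant. Expanding the excess loss along the segment $t\mapsto t\bbeta_S^\dagger$, $t\in[0,1]$, and using the stationarity $\nabla L_S(\bbeta_S^\dagger)=\b0$, it equals $\int_0^1 t\,\bE\{A''(t\,\olbx_S\bbeta_S^\dagger)\,(\olbx_S\bbeta_S^\dagger)^2\}\,dt$. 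Using the continuity and strict positivity of $A''$ (so that $A''$ is bounded below on every compact set), the bound $\|\bx\|_\infty\le1$ from \ref{a1}, and $\lambda_{\min}(\Sigma_S)\ge c_{\min}$ from \ref{a2}, one localizes this integral to a compact window of linear-predictor values on which $A''$ is bounded below and deduces that the integral cannot remain bounded unless $|\bbeta_S^\dagger|_\infty$ is. I expect this conversion of an excess-loss bound into an $\ell_\infty$ bound to be the main obstacle; see below.

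For (ii), once $\bbeta_S^\dagger$ is confined to a fixed compact set, the Hessian $\nabla^2 L_S(\bbeta_S^\dagger)=\bE\{\olbx_S^\rT A''(\olbx_S\bbeta_S^\dagger)\olbx_S\}$ has eigenvalues bounded away from zero (again by \ref{a1}, \ref{a2} and continuity of $A''$), so the regularity conditions verified for \cite{he2000parameters} in the proof of Theorem \ref{thm1} hold verbatim with $\bbeta_S^\dagger$ replacing the truth. Since $S$ is computed from $\bD_2$ and is thus independent of $\bD_1$, we condition on $S$ and invoke Theorem 2.1 of \cite{he2000parameters} to get $\|\widetilde{\bbeta}^1-\bbeta_S^\dagger\|^2 = o_p(p_s/n_1) = o_p(1)$ whenever $p_s\log p_s/n\to0$. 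Combining with (i), $|\widetilde{\bbeta}^1|_\infty \le |\bbeta_S^\dagger|_\infty + \|\widetilde{\bbeta}^1 - \bbeta_S^\dagger\| \le C_1 + o_p(1)$, which is at most $C_\beta$ with probability tending to one.

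The delicate point is (i): the GLM variance function $A''$ can degenerate for large linear predictors (in logistic regression $A''(\theta)\to0$ as $\theta\to\pm\infty$), so there is no global strong-convexity estimate, and a crude bound for $\|\bbeta_S^\dagger\|$ picks up a factor growing with $p_s$. The remedy is to localize carefully: use that a large $|\bbeta_S^\dagger|_\infty$ forces $\olbx_S\bbeta_S^\dagger$ to be spread out through $\lambda_{\min}(\Sigma_S)\ge c_{\min}$, restrict to a compact range of values on which $A''$ is bounded below, and show that the contribution from that range alone already pushes the excess loss past its admissible constant bound, keeping the final constant free of $n$, $p$ and $S$. Making this truncation quantitative is where the real work lies.
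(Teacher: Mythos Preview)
Your route differs from the paper's. The paper works entirely at the empirical level: since $\widetilde{\bbeta}^1$ minimizes $\ell_S$, it lies in the sublevel set $\{\bbeta_S:\ell_S(\bbeta_S)\le \ell_S(\b0)=A(0)\}$; the paper then asserts that $\ell_S$ is strongly convex (because $A''>0$ and $\widehat{\Sigma}_S=\overline{\bX}_S^\rT\overline{\bX}_S/n$ is positive definite with high probability, via Lemma~1 of \cite{fei2019drawing}), and concludes that this sublevel set is contained in a fixed ball. You instead pass through the population pseudo-truth $\bbeta_S^\dagger$, bound it via an excess-loss identity, and then invoke \cite{he2000parameters} to transfer the bound to $\widetilde{\bbeta}^1$. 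Your detour buys something: you correctly isolate the fact that for families such as the logistic, $A''(\theta)\to 0$ as $|\theta|\to\infty$, so neither $\ell_S$ nor $L_S$ is \emph{globally} strongly convex, and a sublevel-set argument needs a localization step. The paper's proof asserts strong convexity without addressing this degeneracy, so the obstacle you flag is real and is in fact glossed over there.

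That said, your step~(i) has a gap of its own. Your upper bound $A(0)-L_S(\bbeta_S^\dagger)\le A(0)-\bE\ell(\bbeta^*)$ pulls in the full-model risk $\bE\ell(\bbeta^*)=\bE\{A(\olbx\bbeta^*)-Y\,\olbx\bbeta^*\}$, and under \ref{a1}--\ref{a2} alone this is not a constant depending only on $c_{\min},c_{\max},c_\beta,A(0)$: for Poisson, $\bE\ell(\bbeta^*)=\bE\{e^{\olbx\bbeta^*}(1-\olbx\bbeta^*)\}$ can be arbitrarily negative, and without sparsity $|\olbx\bbeta^*|$ is controlled only by $(p+1)c_\beta$. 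So either you must also invoke \ref{a3} (which the lemma as stated does not), or you need a different lower bound on $L_S(\bbeta_S^\dagger)$; for bounded-response families $L_S\ge 0$ works, but a uniform argument across exponential families is not free. Separately, your localization sketch controls $\bE(\olbx_S\bbeta_S^\dagger)^2\ge c_{\min}\|\bbeta_S^\dagger\|_2^2$, which bounds the $\ell_2$-norm, not $|\bbeta_S^\dagger|_\infty$; an extra step is needed for the sup-norm conclusion. Step~(ii) is essentially fine, though note that \cite{he2000parameters} must be applied in the misspecified sense (centered at the population minimizer $\bbeta_S^\dagger$ rather than a true parameter), which their M-estimation framework accommodates once you have verified curvature at $\bbeta_S^\dagger$.
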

\begin{proof} 
By definition,
	\begin{gather*}
	\widetilde{\bbeta}_{S_{+j} } = 
	\argmin_{\bbeta_{S}\in \bR^{p_s+1}} \ell_{S} (\bbeta_{S}) = \argmin_{\bbeta_{S}\in \bR^{p_s+1}} \ell (\bbeta_{S};\bY^1,\bX^1_{S}).
	\end{gather*}	
	If $S^* \subseteq S$, the result immediately follows from Theorem \ref{thm1} by taking $C_\beta = 2c_\beta$.
	When $S^* \not\subseteq S$, the minimizer $\widetilde{\bbeta}_{S_{+j} }$ is not an unbiased estimator of $\bbeta^*_{S}$ anymore. However, we show that the boundedness of $\widetilde{\bbeta}_{S_{+j} }$ is guaranteed from the strong convexity of the objective function $\ell_{S} (\bbeta_{S})$.
	
	To see this, we note that the observed information is $\nabla^2 \ell_S (\bbeta_S)= \widehat{I}_S(\bbeta_S) = \frac{1}{n} {\overline{\bX}_S}^\rT \bV_S{\overline{\bX}_S},$
	where $\bV_S = \mathrm{diag}\{ A''( \olbx_{1S}\bbeta_S),\ldots, A''( \olbx_{nS}\bbeta_S )\}$ consisting of all positive diagonal entries, because
	of the positivity assumption on $A''(\cdot)$.
	Then for any  column vector $w \in \bR^{p_s+1}$, 
	    $\bV_S^{1/2} \overline{\bX}_S w = 0$ if and only if  $\overline{\bX}_S w = 0$, implying that the positive definiteness of $\nabla^2 \ell_S (\bbeta_S)$ is equivalent to that of $ \widehat{\Sigma}_S = \frac{1}{n} {\overline{\bX}_S}^\rT {\overline{\bX}_S}$. On the other hand, with $p_s\log p_s/n \rightarrow 0$,  Lemma 1 of \cite{fei2019drawing} implies that, 	with probability going to 1, $\|\widehat{\Sigma}_S - \Sigma_S \| \le \varepsilon$ for $\varepsilon = \min(1/2, c_{\min}/2)$, and, hence,  
	\begin{equation*}
	    \lambda_{\min}(\widehat{\Sigma}_S) \ge \lambda_{\min}(\Sigma_S) -\varepsilon \ge \lambda_{\min}(\Sigma) -\varepsilon \ge c_{\min}/2 > 0.
	\end{equation*}

	Thus, with probability going to 1, $\widehat{\Sigma}_S$ is positive definite, yielding that 
	$$\ell_S(\bbeta_S) = n^{-1} \sum_{i=1}^{n}  \left\{  A( \olbx_{iS}\bbeta_S ) - Y_i \olbx_{iS}\bbeta_S  \right\}$$ is strongly convex with respect to $\bbeta_S$.
		Hence, $\widetilde{\bbeta}_{S_{+j} } \in \{\bbeta_S: \ell_S(\bbeta_S)\le A(0) \} $, which is a strongly convex set with probability going to 1. 
As  $A(0) $ does not depend on $S$ or the data, there exists a constant $C_\beta>0$ (which only depends on $A(0)$, but does not depend on $S$ or the data), such that $|\widetilde{\bbeta}_{S }|_\infty \le C_\beta$ holds with probability going to 1.
\end{proof}

\begin{proof}of Theorem \ref{mainthm}:
	
	We define the \textit{oracle} estimators of $\beta^*_{j}$ on the full data $(\bY,\bX)$ and the $b$-th subsample $\bD_1^b$ respectively, where the candidate set is the true set $S^*$:
	\begin{gather*}\label{orc}
	\check{\bbeta}_{S^*_{+j}} = \argmin_{\bbeta \in \bR^{s_0+1} } \ell_{S^*_{+j}}(\bbeta_{S^*_{+j}}) = \argmin_{\bbeta \in \bR^{s_0+1} } \ell_{S^*_{+j}}(\bbeta_{S^*_{+j}}; \bY,\bX_{S^*_{+j}}), \;
	\check{\beta}_{j} = \left(\check{\bbeta}_{S^*_{+j}} \right)_j; \\
	\check{\bbeta}^b_{S^*_{+j}} = \argmin_{\bbeta \in \bR^{s_0+1} } \ell^b_{S^*_{+j}}(\bbeta_{S^*_{+j}})= \argmin_{\bbeta \in \bR^{s_0+1} } \ell_{S^*_{+j}}(\bbeta_{S^*_{+j}}; \bY^{1(b)},\bX^{1(b)}_{S^*_{+j}}), \;
	\check{\beta}^b_{j} = \left(\check{\bbeta}^b_{S^*_{+j}} \right)_j. \label{orc2}
	\end{gather*}
	By Theorem \ref{thm1} and given $s_0^2\log s_0/n \rightarrow 0$, for each $j\in\{1, \ldots,p\}$,
	\begin{equation}\label{clt1}
	\sqrt{n}(\check{\beta}_{j}-\beta^*_j)/\check{\sigma}_j \xrightarrow{d} N(0,1) \quad\mathrm{as}\quad n \rightarrow \infty,
	\end{equation}
	where $\check{\sigma}_j^2 = \left( \{ {I}^*_{S^*_{+j}}\}^{-1}  \right)_{jj}$. 
	
	With the oracle estimators $\check{\beta}_{j}$'s and $\check{\beta}^b_{j}$'s, we have the following decomposition:
	\begin{equation}\label{eq:5}
	\begin{aligned}
	& \sqrt{n}\left(\widehat{\beta}_j - \beta^*_j \right) \\
	=&  \sqrt{n}\left(\check{\beta}_j  - \beta^*_j \right) + \sqrt{n}\left(\widehat{\beta}_j - \check{\beta}_j  \right)  \\
	=& \sqrt{n}\left(\check{\beta}_j  - \beta^*_j \right) + \sqrt{n}\left(\frac{1}{B} \sum_{b=1}^{B} \widetilde{\beta}_j^b - \check{\beta}_j  \right)  \\
	=& \underbrace{\sqrt{n}\left(\check{\beta}_j  - \beta^*_j \right)}_{\rI}  + \underbrace{\sqrt{n}\left(\frac{1}{B} \sum_{b=1}^{B} \check{\beta}_j^b - \check{\beta}_j  \right)}_{\rI\rI} + \underbrace{\sqrt{n}\left(\frac{1}{B} \sum_{b=1}^{B} \left\{\widetilde{\beta}_j^b - \check{\beta}^b_j \right\}  \right)}_{\rI\rI\rI}.
	\end{aligned}
	\end{equation}
	
	The first two terms in (\ref{eq:5}), which do not involve various selections $S^b$'s, deal with the oracle estimators and the true active set $S^*$. We need to show the following,	which will lead to the results stated in the theorem by using Slutsky's theorem. 
	\begin{itemize}
		\item[(a)] $\rI/\check{\sigma}_j = \sqrt{n}\left(\check{\beta}_j  - \beta^*_j \right)/\check{\sigma}_j \stackrel{d}{\rightarrow} N(0,1)$;
		\item[(b)] $\rI\rI =  \frac{\sqrt{n}}{B} \sum_{b=1}^{B} \left\{ \check{\beta}_j^b - \check{\beta}_j \right\} =o_p(1)$;
		\item[(c)] $\rI\rI\rI = \frac{\sqrt{n}}{B} \sum_{b=1}^{B} \left\{\widetilde{\beta}_j^b - \check{\beta}^b_j \right\}   =o_p(1)$.
	\end{itemize}
	 
	 First, (a) holds because of (\ref{clt1}).
	To show (b), i.e. $\rI\rI = o_p(1)$, we first denote $\xi_{b,n} = \sqrt{n}\left( \check{\beta}_j^b - \check{\beta}_j  \right)$, then $\rI\rI = \left(\sum_{b=1}^{B} \xi_{b,n}\right)/B$. Since the sampling indicator vectors, $\bJ_b$'s (defined in Section \ref{s4}) are i.i.d, 
	 $\xi_{b,n}$'s are i.i.d conditional on data 
	 $\bD^{(n)}= (\bY, \bX)$. The conditional distribution of $\sqrt{n}\left( \check{\beta}_j^b - \check{\beta}_j  \right)$ given $\bD^{(n)}$ is asymptotically the same as the unconditional distribution of $ \sqrt{n}\left(\check{\beta}_j  - \beta^*_j \right)$, which converges to zero Gaussian by (\ref{clt1}). 
	 With the uniform boundedness of $\check{\beta}_j^b$ and $\check{\beta}_j$ as shown in Lemma \ref{l1}, 
	we can show that $\bE(\xi_{b,n} |\bD^{(n)})  \rightarrow 0$ and $\var(\xi_{b,n} |\bD^{(n)}) \rightarrow \check{\sigma}_j^2 $ uniformly over $\bD^{(n)}$ as $n\rightarrow \infty$. Furthermore, $\bE(\rI\rI|\bD^{(n)})=\bE(\xi_{b,n} |\bD^{(n)})$,  and $
	\var(\rI\rI |\bD^{(n)}) = \var(\xi_{b,n} |\bD^{(n)})/B.$
	Denote by $\Omega_n$ the sample space of $\bD^{(n)}$. For any $ \delta,\zeta>0$, there exist $N_0,B_0>0$ such that when $n>N_0, B>B_0$,
	\begin{equation*}
	\begin{aligned}
	& \pr(|\rI\rI|\ge \delta)	\le \int_{ \Omega_n}\pr\left(|\rI\rI|\ge \delta\,\middle\vert\, \bD^{(n)}\right)\mathrm{d}\pr(\bD^{(n)}) \\
	\le& \int_{\Omega_n}\pr\left(|\rI\rI - \bE (\rI\rI |\bD^{(n)}) |\ge \delta/2 \,\middle\vert\, \bD^{(n)}\right)\mathrm{d}\pr(\bD^{(n)}) \\
	\le& \int_{\Omega_n}\frac{\var\left(\rI\rI\,\middle\vert\, \bD^{(n)}\right)}{\delta^2/4} \mathrm{d}\pr(\bD^{(n)}) \le\frac{\check{\sigma}_j^2}{B_0\delta^2/4} \int_{\Omega_n}\mathrm{d} \pr(\bD^{(n)}) \le \zeta.
	\end{aligned}
	\end{equation*}
	Thus, $\rI\rI =o_p(1)$.
	
	To prove (c), i.e. $\rI\rI\rI=o_p(1)$, we first note that each subsample $\bD_{1}^b$ can be regarded as a random sample of $n_1= q n $ ($0<q<1$) i.i.d. observations from the population distribution for which Assumption \ref{a3} holds, that is $|S^b|\le K_1n^{c_1} $ and $\pr\left(S^*\subseteq S^b \right)\ge 1 - K_2(p\vee n)^{-1-c_2}$. 
	We show that for any $b$, conditional on $S^b\supseteq S^*$, $\sqrt{n}  \left(\widetilde{\beta}_j^b - \check{\beta}^b_j \right) = o_p(1)$.

	To see this, we first arrange the order of the components of $\bx= (x_1, \ldots, x_p)$ such that the first $s_0$ components are signal variables. In other words, $S^*=\{1, \ldots, s_0\}$.  From (\ref{junk2}) in in the proof of Theorem 1 and omitting superscript $b$, we have that
\begin{eqnarray} \label{junk}
        \widetilde{\beta}_{j} - \beta^*_{j} & = &  -n_1^{-1} \widetilde{e}_j^\rT  \{I^*_{S_{+j}} \}^{-1} U_{S_{+j}}(\bbeta^*_{S_{+j}}) + \widetilde{r}_{n_1}, \nonumber\\
        \check{\beta}_{j} - \beta^*_{j} & = &  -n_1^{-1} \check{e}_j^\rT \{I^*_{S^*_{+j}} \}^{-1} U_{S^*_{+j}}(\bbeta^*_{S^*_{+j}}) + \check{r}_{n_1},
\end{eqnarray}
where $\widetilde{e}_j=(0,\ldots,0,1, 0, \ldots,0)^\rT$
 is a unit vector of length $|{S_{+j}} |$
to index the position of variable $j$ in $S_{+j}$, $\check{e}_j$ is a unit vector of length $|{S^*_{+j}} |$  to index the position of variable $j$ in $S^*_{+j}$, and the residuals $\|\widetilde{r}_{n_1}\|^2_2 = o_p(1/n_1), \|\check{r}_{n_1}\|^2_2 = o_p(1/n_1)$. 
Here, $I^*_{S_{+j}}$ and  $I^*_{S^*_{+j}}$ are two submatrices of the expected information at $\bbeta^*$, i.e. $I^* =  \bE \{\frac{1}{n}{\overline{\bX}}^\rT \bV \overline{\bX}\}$, 
 where $\bV = \mathrm{diag}\{\nu(\mu_1),\ldots,\nu(\mu_n)\}$ is an $n \times n$ diagonal matrix with $\mu_i= g^{-1}(\overline{\bx}_i\bbeta^*)$; see Section 2.1 for the notation.

Therefore, the $jk$-th $(j,k=0,1, \ldots, p)$ entry of $I^*$, a $(p+1) \times (p+1)$ matrix, is $\bE(x_j \nu(\mu) x_k)$ with $\mu= g^{-1}(\overline{\bx}\bbeta^*)$.
Now for any $j \in S^*$, $k \in S^c$, the complement of $S^*$, the  partial orthogonality condition (Fan and Lv, 2008; Fan and Song, 2010) that  $\{x_j, j \in S^*\}$ are independent of $\{x_k, k \in S^c\}$ implies that
$\bE(x_j \nu(\mu) x_k) =0$,
as $\mu$ only depends on  $x_j', j' \in S^*$ and $\bE(x_k)=0$ with centered predictors. Therefore, $I^*$ is block-diagonal with two blocks indexed by $S^*$ and $S^c$. That is, 
\begin{equation*}
   I^*= 
\left(\begin{array}{cc}
\bE ( \frac{1}{n} \overline{\bX}_{S^*}^\rT\bV\overline{\bX}_{S^*}) &  0 \\
 0 &  \bE ( \frac{1}{n}{\bX}_{S^c}^\rT\bV {\bX}_{S^c})
\end{array}\right).
\end{equation*}
where the submatrices $\overline{\bX}_{S^*}$ and ${\bX}_{S^c}$ are as defined in
Section 2.1.
 Hence, ${I}^*_{S_{+j}}$ is block-diagonal with two blocks indexed by $S^*$ and $S_{+j}\setminus S^*$, and ${I}^*_{S^*_{+j}}$ is block-diagonal with two blocks indexed by $S^*$ and $S^*_{+j}\setminus S^*=\emptyset$ if $j\in S^*$ or $=\{j\}$ otherwise.
 
Therefore,  $\{I^*\}^{-1}$, $ \{{I}^*_{S_{+j}} \}^{-1}$ and $\{{I}^*_{S^*_{+j}} \}^{-1}$ are all block-diagonal. Furthermore, the blocks corresponding to $S^*$ in $ \{{I}^*_{S_{+j}} \}^{-1}$ and $\{{I}^*_{S^*_{+j}} \}^{-1}$ are identical and are equal to 
$\{\bE ( \frac{1}{n} \overline{\bX}_{S^*}^\rT\bV \overline{\bX}_{S^*})\}^{-1}$.
Write $U(\bbeta^*) = (u_0, u_1, u_2, \ldots, u_p)^\rT$, 
$\widetilde{e}_j^\rT \{{I}^*_{S_{+j}} \}^{-1}=  (\widetilde{i}_{jk})_{k \in S_{+j}}$ and $\check{e}_j^\rT  \{{I}^*_{S^*_{+j}} \}^{-1}  = (\check{i}_{jk})_{k \in S^*_{+j}}$. Then, it follows that $\widetilde{i}_{jk} = \check{i}_{jk}$ for $k\in S^*$, which leads to 
\[
    \begin{aligned}
    \sqrt{n_1}\left(\widetilde{\beta}_{j} - \check{\beta}_{j} \right) =& - \frac{1}{\sqrt{n_1}}\sum_{k \in S_{+j}}\widetilde{i}_{jk}u_k + \frac{1}{\sqrt{n_1}}\sum_{k \in S^*_{+j}}\check{i}_{jk}u_k + r'_{n_1} \\
     =& - \frac{1}{\sqrt{n_1}}\sum_{k \in S\setminus S^* }\widetilde{i}_{jk}u_k  + \frac{1 (j \notin S^*)}{\sqrt{n_1}}
     \left(\check{i}_{jj} - \widetilde{i}_{jj} \right)u_j + r'_{n_1} 
    \end{aligned}
\]
where $ r'_{n_1} = \sqrt{n_1} (\widetilde{r}_{n_1} -\check{r}_{n_1}) = o_p(1)$,
and $\widetilde{r}_{n_1}$ and
$\check{r}_{n_1}$ are as in
(\ref{junk}). 

With Assumption (A3), $|S\setminus S^*| \le K_1n^{c_1} = o(\sqrt{n_1})$ with $ 0 \le c_1<1/2$ and, thus,  $\var(\sum_{k \in S\setminus S^* }\widetilde{i}_{jk}u_k ) = o(n_1)$. By the Chebyshev inequality,
the first term on the right hand side converges to 0 in probability. Thus, 
each of these three terms  is  $o_p(1)$ and we have $\sqrt{n_1}\left(\widetilde{\beta}_{j} - \check{\beta}_{j} \right) = o_p(1)$. As $n_1/n = q$ where $ 0<q<1$, the original statement holds.


	Now define $\eta_b= 1\left\{ S^* \not\subseteq S^b \right\}\sqrt{n}  \left(\widetilde{\beta}_j^b - \check{\beta}^b_j \right)$, while omitting subscripts $j$ in $\eta$ for simplicity, then $\rI\rI\rI =  \left(\sum_{b=1}^{B} \eta_b \right)/B$.
	When $S^* \not\subseteq S^b $, $\widetilde{\beta}_j^b $ is not an unbiased estimator of  $\beta^*_j$ any more, instead we try to bound it by some constant. By Lemma \ref{l1}, there exists $C_\beta\ge c_\beta$ such that $\sup_{b} \left|\widetilde{\beta}_j^b - \check{\beta}^b_j \right| \le \sup_{b} \left|\widetilde{\beta}_j^b - {\beta}^*_j \right| + \sup_{b} \left|\check{\beta}_j^b - {\beta}^*_j \right|  \le 2C_\beta +1$. Therefore, by \ref{a3},
	\begin{eqnarray} \label{junk3}
	\bE(\eta_b) &\le & \pr\left( S^* \not\subseteq S^b \right) \sqrt{n} \sup_{ 1 \le b \le  B} \left|\widetilde{\beta}_j^b - \check{\beta}^b_j \right| \le 2C_\beta\sqrt{n} K_2(p\vee n)^{-1-c_2},  \nonumber \\
	\var(\eta_b) & \le &  \pr\left(S^* \not\subseteq S^b \right) n\sup_{1 \le b \le  B} \left(\widetilde{\beta}_j^b - \check{\beta}^b_j \right)^2 \le 4C_\beta^2n K_2(p\vee n)^{-1-c_2}.
	\end{eqnarray}
	With dependent $\eta_b$'s, we further have
	\begin{gather*}
	\bE(\rI\rI\rI) = \bE\left\{\left(\sum_{b=1}^{B} \eta_b \right)/B  \right\} \le 2C_\beta\sqrt{n} K_2(p\vee n)^{-1-c_2}, \\
	\var(\rI\rI\rI) \le \frac{1}{B^2} \sum_{b=1}^{B}\sum_{b'=1}^{B}|\cov\left(\eta_b,\eta_{b'}\right)| \le 4C_\beta^2n K_2(p\vee n)^{-1-c_2},
	\end{gather*} 
	where the last inequality holds because of  $|\cov\left(\eta_b,\eta_{b'}\right)|
\le \{ \var(\eta_b)
\var(\eta_{b'})\}^{1/2}$ and (\ref{junk3}).
	Then we show $\rI\rI\rI = o_p(1)$. More specifically, for any $\delta>0, \zeta>0$, take $N_0=\lfloor (C_\beta \delta)^{1/2+c_2}\rfloor$, where,
 for a real number $a$,	$\lfloor a \rfloor$ denotes the integer part of $a$. When $n>N_0$, $\bE (\rI\rI\rI) \le \delta/2$. Also let $N_1 = \lfloor \{\zeta \delta^2/(16C_\beta^2 K_2)\}^{c_2}\rfloor$. Then when $n > \max(N_0, N_1)$, we have
	\begin{eqnarray*}
	\pr(|\rI\rI\rI|\ge \delta)	& \le &  \pr\left(|\rI\rI\rI - \bE (\rI\rI\rI) |\ge \delta/2 \right) \\
	& \le  & \frac{\var\left(\rI\rI\rI\right)}{\delta^2/4} \le \frac{16C_\beta^2 K_2}{\delta^2} n (p\vee n)^{-1-c_2} \\ 
	& < & 
	\frac{16C_\beta^2 K_2}{\delta^2} n^{-c_2}
	< \zeta,
	\end{eqnarray*}
	where the first inequality is due to $|\bE(\rI\rI\rI)| \le \delta/2$ when $n>N_0$,  the second one is due to the Chebyshev inequality and the last one is due to $n>N_1$.
\end{proof}

\begin{proof}of Theorem \ref{thm3}:

	Following the previous proof, we replace the arguments in $j$ with those in $S^{(1)}$. The \textit{oracle} estimators are
	\begin{gather*}
	\check{\bbeta}_{S^{(1)}\cup S^*}  = \argmin\ell_{S^{(1)}\cup S^*}(\bbeta_{S^{(1)}\cup S^*}; \bY,\bX_{S^{(1)}\cup S^*}), \;
	\check{\bbeta}_{S^{(1)}} = \left( \check{\bbeta}_{S^{(1)}\cup S^*} \right)_{S^{(1)}}; \\
	\check{\bbeta}^b_{S^{(1)}\cup S^*} = \argmin\ell_{S^{(1)}\cup S^*}(\bbeta_{S^{(1)}\cup S^*}; \bY^{1(b)},\bX^{1(b)}_{S^{(1)}\cup S^*}), \;
	\check{\bbeta}^b_{S^{(1)}} = \left( \check{\bbeta}^b_{S^{(1)}\cup S^*} \right)_{S^{(1)}}.
	\end{gather*}
	Notice that $|S^{(1)}|=p_1=O(1)$, as $n\rightarrow \infty$, $|S^*\cup {S^{(1)}}|=O\big(|S^*| \big)=o(n)$, so that the above quantities are well-defined. The oracle estimator  follows
	\begin{equation*}\label{2clt1}
	\sqrt{n}\left\{ ({I}^*_{S^{(1)} \cup S^*} )^{-1/2} \right\}_{S^{(1)}} \left(\check{\bbeta}_{S^{(1)}}-\bbeta^*_{S^{(1)}}\right) \xrightarrow{d} N(0,\bI_{p_1}) \quad\mathrm{as}\quad n \rightarrow \infty.
	\end{equation*}
	Here, for a square matrix, say, $Q$, $(Q)_{S}$ is a submatrix of $Q$ with rows and columns indexed by $S$. 
	Denote by ${I}^{(1)} =\left\{ ({I}^*_{S^{(1)} \cup S^*} )^{-1/2} \right\}_{S^{(1)}}$.
	
	Similar to (\ref{eq:5}), we have a decomposition
	\begin{equation*}
	\sqrt{n} {I}^{(1)} (\widehat{\bbeta}^{(1)} - \bbeta^*_{S^{(1)}})
	= \sqrt{n} {I}^{(1)} \left(\check{\bbeta}_{S^{(1)}}  - \bbeta^*_{S^{(1)}} \right) + \sqrt{n} {I}^{(1)} \left(\frac{1}{B} \sum_{b=1}^{B} \widetilde{\bbeta}_{S^{(1)}}^b - \check{\bbeta}_{S^{(1)}}  \right).  
	\end{equation*}
	Analogous to the derivations in the previous proof, it follows that the second term is $o_p(1)$.
	 Hence, the theorem holds.
\end{proof}

\section*{Appendix B: Additional Simulations}

To assess the robustness of our method, we performed additional simulations when the parametric model was  mis-specified and
when the sparsity condition was violated. 

{\bf Example B.1} assumed that $Y|\bx$ followed a negative binomial distribution:
\begin{gather*}
    \pr(Y = y) = \frac{\Gamma(y+r)}{\Gamma(r)y!}p^r(1-p)^y, \\
    \bE Y = \mu = r(1-p)/p = \bx\bbeta^*,
\end{gather*}
with $r=10$, sample size $n=300$, $p=500$, and $s_0 = 5$. However, we modeled the data using SSGLM under the Poisson regression (\ref{pois}) with  $B=300$. Table \ref{misspec} summarizes the results based on $200$ simulated data sets. The $\widehat{\beta}_j$'s had  small biases. The estimated standard errors were slightly less than the empirical standard deviations. Nevertheless, the coverage probabilities were still close to the $0.95$ nominal level.

\begin{table}[ht]
\centering
\caption{SSGLM under mis-specified model. \label{misspec}}
\begin{tabular}{rrrrrrr}
  \hline
 Index $j$ & 90 & 179 & 206 & 237 & 316 & Noise \\ 
 $\beta^*_j$ & -1.000 & -0.500 & 0.500 & 1.000 & 1.500 & 0.000 \\ 
  \hline
  Bias & -0.020 & 0.020 & 0.018 & 0.001 & 0.010 & -0.001 \\ 
  SE & 0.240 & 0.235 & 0.232 & 0.236 & 0.243 & 0.233 \\ 
  SD & 0.258 & 0.243 & 0.249 & 0.249 & 0.250 & 0.231 \\ 
  Cov prob & 0.955 & 0.945 & 0.900 & 0.930 & 0.925 & 0.946 \\ 
  Sel freq & 0.724 & 0.177 & 0.216 & 0.723 & 0.977 & 0.021 \\ 
   \hline
\end{tabular}
\end{table}

{\bf Example B.2} assumed a non-sparse truth $\bbeta^*$ under the Poisson truth (\ref{pois}). With $n=300$ and $p=500$, we let $s_0 = 100$. Among the 100 predictors with non-zero effects,   $96$ $\beta^*_j$'s were small, which were randomly drawn from $\mathrm{Unif}[-0.5, 0.5]$, and the other 4 had values $-1.5, -1, 1, 1.5$ (as shown in Figure \ref{nonsparse}). With many small but non-zero signals, SSGLM still gave nearly  unbiased estimates to all of them. See Table \ref{tabnonsparse}, where the columns represent 4 large size $\beta^*_j$'s, and the averages over all small signals and all noise variables, respectively.

\begin{figure}
	\raggedleft
	\caption{SSGLM under non-sparse truth, with $p=500$ and $s_0 =100$.\label{nonsparse}}
	\includegraphics[scale=0.7]{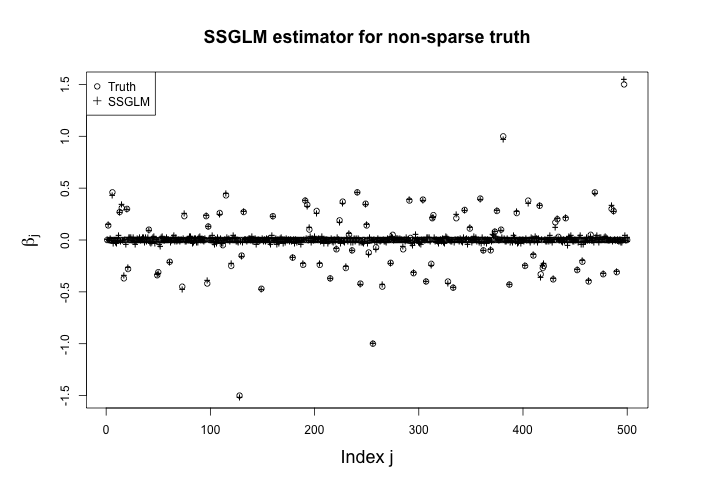}
\end{figure}
\begin{table}[ht]
\centering
\caption{SSGLM under non-sparse truth.\label{tabnonsparse}}
\begin{tabular}{rrrrrrr}
  \hline
Index $j$ & 128 & 256 & 381 & 497 & Small & Noise \\ 
$\beta^*_j$ & -1.50 & -1.00 & 1.00 & 1.50 & - & 0 \\ 
  \hline
  Bias & -0.01 & 0.003 & -0.03 & 0.05 & 0.003 & $9\times 10^{-4}$ \\ 
  SE & 0.31 & 0.30 & 0.30 & 0.31 & 0.29 & 0.30 \\ 
  SD & 0.31 & 0.30 & 0.32 & 0.30 & 0.29 & 0.29 \\ 
  Cov prob & 0.93 & 0.93 & 0.93 & 0.94 & 0.94 & 0.94 \\ 
  Sel freq & 0.87 & 0.50 & 0.49 & 0.87 & 0.06 & 0.03 \\ 
   \hline
\end{tabular}
\end{table}

\end{document}